\documentclass[letterpaper,journal]{IEEEtran}

\usepackage{amsmath,amsfonts,amssymb}
\usepackage{array}
\usepackage{graphicx}
\usepackage[caption=false,font=normalsize,labelfont=sf,textfont=sf]{subfig}
\usepackage{textcomp}
\usepackage{stfloats}
\usepackage{url}
\usepackage{verbatim}
\usepackage{cite}
\usepackage{tikz}
\usepackage{balance}
\usepackage{booktabs}

\usetikzlibrary{
arrows.meta,
calc,
positioning,
fit,
angles,
quotes,
decorations.pathmorphing
}

\newcommand{\InsertCampaignFigure}[3]{%
\includegraphics[width=#2]{figures/#1}%
}

\newtheorem{lemma}{Lemma}

\begin{document}

% ==========================================================
% TITLE
% ==========================================================

\title{LiDAR-Derived Surface Priors for Multimodal Sensing-Assisted
NLoS Beam Search in Indoor 60-GHz Networks}

\author{
Amod Ashtekar,
Dalton Davis,
Rafaela Lomboy,
Omar Ibrahim,
Raj Sai Sohel Bandari,
and Mohammed E. Eltayeb,
~\IEEEmembership{Senior Member,~IEEE,}%
\thanks{
Amod Ashtekar, Rafaela Lomboy, Omar Ibrahim, Raj Sai Sohel Bandari,
and Mohammed E. Eltayeb are with the Department of Electrical and
Electronic Engineering, California State University, Sacramento,
Sacramento, CA, USA.
}%
\thanks{
Dalton Davis is with the Department of Computer Science,
University of California, Davis, Davis, CA, USA.
}%
\thanks{
Corresponding author: Mohammed E. Eltayeb
(e-mail: mohammed.eltayeb@csus.edu).
}%
\thanks{
This material is based upon work supported by the National
Science Foundation under Grant NSF-2243089.
}%
}

\maketitle

% ==========================================================
% ABSTRACT
% ==========================================================

\begin{abstract}
Highly directional 60-GHz Internet-of-Things (IoT) links can exploit
naturally occurring indoor surfaces to sustain connectivity under blockage.
Identifying viable non-line-of-sight (NLoS) paths, however, can require
extensive RF beam training. This paper investigates whether LiDAR can reduce this
search overhead by providing a surface-aware prior without assuming a direct
mapping between optical return and mmWave reflection. The proposed framework
uses LiDAR-derived geometry and return statistics to rank candidate
propagation directions, while RF measurements remain responsible for final
beam selection. The experimental validation is organized in three stages to
separate descriptor robustness, cross-modal association, and beam-search
performance. Controlled LiDAR measurements first quantify how geometric and
radiometric surface descriptors vary with acquisition geometry. Matched
LiDAR and 60-GHz measurements in an L-shaped corridor then determine whether
these descriptors are associated with the measured surface-mediated RF
response under a prescribed NLoS interaction. Finally, a separate room-scale
campaign evaluates the resulting prior using exhaustive TX--RX beam maps
without prescribing the underlying propagation mechanism. The measurements
show that descriptor reliability depends on acquisition geometry and
point-cloud representation, and that LiDAR and RF surface responses exhibit
cross-modal association without supporting deterministic RF-power
prediction.  In the room experiment, local three-ring 3-D planarity retains a beam within
3~dB of exhaustive search at \(74.5\%\) of the measured locations while
reducing RF beam-pair probing by \(72\%\) relative to exhaustive probing over
the candidate search region.  These results establish LiDAR-derived
local surface structure as a communication-oriented prior for concentrating
RF probing and reducing mmWave beam-search uncertainty.
\end{abstract}

\begin{IEEEkeywords}
Internet of Things, LiDAR sensing, millimeter-wave communications,
multimodal sensing, NLoS beam alignment, passive reflectors,
sensing-assisted beam search.
\end{IEEEkeywords}

% ==========================================================
% I. INTRODUCTION
% ==========================================================

\section{Introduction}
\label{sec:introduction}

\IEEEPARstart{M}{illimeter}-wave (mmWave) communication enables high-rate
connectivity for emerging Internet-of-Things (IoT) applications, including
industrial automation, mobile robots, smart infrastructure, and dense indoor
networks \cite{sahoo2019mmwaveiot,dinh2022device,solomitckii2018factory}.
At 60~GHz, however, reliable connectivity requires highly directional
transmission that is vulnerable to blockage and beam misalignment
\cite{cano2021industrial,kim2024widebeam}. In sensing-enabled IoT
deployments, such as industrial gateways and robotic platforms, blockage
therefore creates both a coverage problem and a beam-reacquisition problem:
an alternative propagation direction must be identified while limiting the
RF resources consumed by beam training. 
Indoor environments contain walls, boards, cabinets, glass, metallic objects,
and other naturally occurring surfaces that can support alternative
non-line-of-sight (NLoS) propagation. Measurements of building materials and
indoor environments have demonstrated substantial reflection and scattering
at millimeter-wave frequencies
\cite{langen1994reflection,lu2014propagation,tip2019indoor,xing2019indoor}.
Moreover, passive reflectors have been shown to extend mmWave coverage around
corners and into blocked regions \cite{khawaja2020passive}. The networking
challenge is therefore how to identify and prioritize promising NLoS
directions without repeatedly probing the complete TX--RX beam codebook.

RF-only exhaustive, hierarchical, and sparse beam-training methods address
this problem through additional channel measurements
\cite{xue2024beammanagement,alkhateeb2014channel,
eltayeb2015opportunistic}. Repeated beam acquisition consumes radio resources
that would otherwise support data transmission and link maintenance,
motivating the use of side information to reduce the directional search space
\cite{xiao2025localization}. Multimodal environmental sensing provides a
complementary approach where sensing-enabled IoT infrastructure can first reduce
spatial uncertainty and concentrate RF probing on a smaller set of candidate
directions. This architecture is particularly relevant to IoT deployments in
which gateways, access points, robots, or instrumented infrastructure already
observe the surrounding environment while supporting directional wireless
links. It also relates to broader integrated sensing and communication (ISAC)
and multimodal sensing-assisted communication, where heterogeneous
environmental information can support beamforming and communication-resource
allocation \cite{liu2022isac,patel2024multimodal,peng2026simac}.

LiDAR is attractive for this role because it directly measures range and
angle, resolves finite surface extent and local orientation, and provides an
instantaneous observation of the surrounding environment.  In this paper,  we consider a
sensing-enabled IoT gateway, access point, or robotic endpoint equipped with
a directional 60-GHz radio and co-located LiDAR.  Unlike map-assisted approaches, the proposed method does not require a
pre-existing geometric map or prior knowledge of the propagation surface at
the current location. Candidate surface directions are identified
opportunistically from the current LiDAR observation.  After LiDAR-to-RF angular registration,
the observed surfaces are associated with candidate RF directions and ranked
before RF probing. Fusion therefore occurs at the decision level: LiDAR
reduces the directional search space by prioritizing candidate directions
based on the observed environment, while measured mmWave power determines
the final communication beam direction. The objective is not to reconstruct the
propagation environment or predict the RF channel from LiDAR, but to use
instantaneous environmental sensing to reduce beam-search uncertainty and RF
training expenditure.

Prior work has used LiDAR and other sensing modalities for learned
sensor-to-beam prediction and multimodal beam selection
\cite{klautau2019lidar,mashhadi2021federated,jiang2023future,
patel2024multimodal,tariq2024quantum,deng2026reliable}. Model-based
approaches have also exploited explicit scene structure. SpaceBeam, for
example, reconstructs a three-dimensional environment and uses ray-based
propagation modeling for beam management \cite{woodford2021spacebeam}, while
point-cloud observations combined with RF measurements have been used to
estimate material permittivity at 60~GHz \cite{virk2018permittivity}.
These approaches demonstrate the value of environment awareness, but they
leave a central modeling question: \emph{what LiDAR-observed surface information can be used to opportunistically
identify and prioritize candidate mmWave propagation directions?}

This distinction is fundamental. A LiDAR generally measures a monostatic or
near-monostatic optical return, whereas a useful mmWave NLoS path involves a
bistatic TX--surface--RX interaction. LiDAR intensity depends on sensing
range, incidence angle, optical response, footprint, detection threshold, and
sensor processing \cite{hofle2007intensity,tan2016tls}. The 60-GHz
surface-mediated field additionally depends on RF permittivity and loss,
polarization, wavelength-normalized surface variation, incidence and
departure directions, and the scattering mechanism
\cite{sato1997materials,goulianos2017scattering,
degliesposti2007scattering}. Consequently, a strong LiDAR return cannot in
general be interpreted as a strong 60-GHz reflection. The objective of this work is therefore not to infer material identity or
predict RF received power from LiDAR, but to determine whether
LiDAR-observed geometry and surface-return structure can rank naturally
occurring surfaces as candidates for RF probing.

Our prior work explored LiDAR-aided NLoS path identification and 
backscatter-guided alignment
\cite{davis2025nlos,ashtekar2026backscatter}. The present
work extends these studies through a unified surface-prior formulation,
controlled cross-view and cross-modal measurements, and an independent
room-scale evaluation using exhaustive TX--RX beam maps. The paper addresses
three research questions:

\begin{itemize}

\item[\textbf{RQ1}]
Which geometric and radiometric LiDAR descriptors retain surface-dependent
information across changes in viewing geometry and sensing conditions?

\item[\textbf{RQ2}]
Do oblique 3-D LiDAR surface descriptors contain information associated with
the measured 60-GHz surface-mediated response of the same surfaces under a
controlled NLoS geometry?

\item[\textbf{RQ3}]
Can instantaneous local 3-D LiDAR observations reduce RF beam-search
uncertainty without a pre-existing environmental map or prior knowledge of
the relevant propagation surface?

\end{itemize}
Together, the three research questions progress from LiDAR descriptor
robustness, to cross-modal surface association, and finally to
communication-level beam-search performance. The main contributions are summarized as follows:

\begin{itemize}

\item We formulate LiDAR-assisted NLoS beam search as a
\emph{surface-informed prior} that opportunistically ranks candidate RF
directions using locally observed indoor surface structure for targeted RF
probing.

\item We relate physical surface-height variation to mmWave phase dispersion
and distinguish RF surface roughness from LiDAR-observed geometric
dispersion, coherence, and planarity.

\item We characterize the acquisition dependence of geometric and
radiometric LiDAR descriptors and examine their association with the
measured 60-GHz response of matched indoor surfaces.

\item We evaluate LiDAR-guided beam selection in a separate indoor
environment and compare it with alternative LiDAR representations,
geometry-informed selection, and equal-budget random probing.

\end{itemize}

The remainder of the paper is organized as follows.
Section~\ref{sec:proposed} develops the LiDAR-derived surface-prior framework
and its physical motivation. Section~\ref{sec:experimental_methodology}
describes the measurement campaigns and evaluation protocol.
Section~\ref{sec:controlled_validation} addresses RQ1 and RQ2, and
Section~\ref{sec:room_validation} addresses RQ3.
Section~\ref{sec:discussion} discusses the resulting design implications,
and Section~\ref{sec:conclusion} concludes the paper.

% ==========================================================
% II. LIDAR-DERIVED SURFACE PRIORS FOR MMWAVE BEAM SEARCH
% ==========================================================

\section{LiDAR-Derived Surface Priors for mmWave Beam Search}
\label{sec:proposed}

The proposed framework uses complementary LiDAR and mmWave observations to
exploit naturally occurring indoor surfaces for NLoS communication.  The LiDAR observation provides three complementary forms of information:
surface geometry, apparent optical return, and local geometric/radiometric
structure. These observations depend on sensing location and viewing
geometry. Fig.~\ref{fig:room_lidar_intensity_location_dependence} illustrates
this dependence using representative returned-intensity maps from three
receiver locations in an indoor environment.  Although the physical environment is unchanged, the
measured LiDAR response varies with range, viewing angle, visible support,
footprint, and occlusion.  The communication relevance of these descriptors is evaluated experimentally
in Sections~IV and~V.

\begin{figure}[t]
\centering
\subfloat[RX4]{%
    \includegraphics[width=0.32\columnwidth]
    {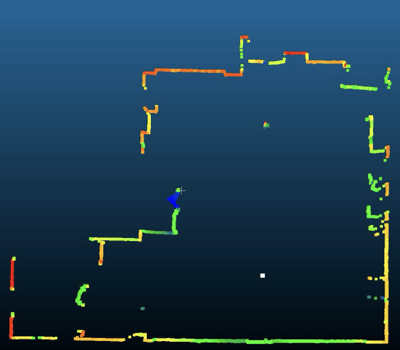}%
    \label{fig:lidar_intensity_rx04}}
\hfill
\subfloat[RX17]{%
    \includegraphics[width=0.32\columnwidth]
    {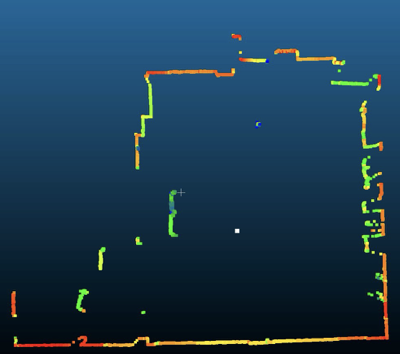}%
    \label{fig:lidar_intensity_rx17}}
\hfill
\subfloat[RX49]{%
    \includegraphics[width=0.32\columnwidth]
    {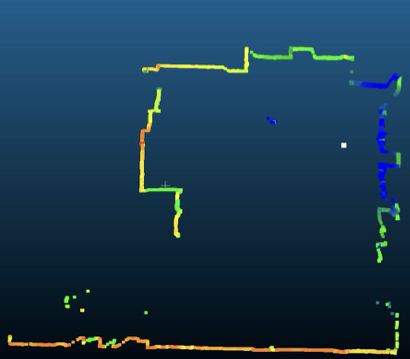}%
    \label{fig:lidar_intensity_rx49}}
\caption{Representative LiDAR returned-intensity maps at three receiver
locations in an indoor environment. Colors indicate relative returned
optical power within each panel. } %The white square marker indicates the LiDAR location.
\label{fig:room_lidar_intensity_location_dependence}
\end{figure}

% ----------------------------------------------------------
\subsection{Physical Motivation for Surface-Assisted Communication}
\label{subsec:physical_motivation}
% ----------------------------------------------------------

The use of LiDAR-observed surfaces as an RF prior is motivated by the
dependence of surface-mediated propagation on the physical structure of the
reflecting surface. For a surface with physical root-mean-square (RMS) height
variation $\sigma_h$, the classical Rayleigh/Beckmann--Kirchhoff rough-surface
approximation gives the coherent specular field roughness factor
$\rho_s=\exp[-8(\pi\sigma_h\cos\vartheta_{\mathrm{inc}}/
\lambda_{\mathrm{RF}})^2]$~\cite{Ju2019,Piesiewicz2007}.
Since reflected power is proportional to the squared magnitude of the
field coefficient, the corresponding coherent reflected-power factor is
\begin{equation}
\eta_{\mathrm{coh}}
=
|\rho_s|^2
=
\exp\!\left[
-\left(
\frac{4\pi\sigma_h\cos\vartheta_{\mathrm{inc}}}
{\lambda_{\mathrm{RF}}}
\right)^2
\right],
\label{eq:specular_coherent_factor}
\end{equation}
where $\eta_{\mathrm{coh}}$ is the coherent reflected-power factor,
$\lambda_{\mathrm{RF}}$ is the RF wavelength, and
$\vartheta_{\mathrm{inc}}$ is the incidence angle measured from the
local surface normal. Equation~\eqref{eq:specular_coherent_factor} shows that
physical surface-height variation becomes increasingly important as its
magnitude grows relative to the wavelength. The associated phase mechanism
is summarized by the following lemma.

\begin{lemma}
\label{lem:rms_phase_dispersion}
Consider a locally planar surface with zero-mean height deviations \(h_m\),
\(m=1,\ldots,M\), and RMS height
\(\sigma_h=(M^{-1}\sum_{m=1}^{M}h_m^2)^{1/2}\).
For a wave of wavelength \(\lambda\), the RMS phase dispersion caused by
the surface-height variation is
\begin{equation}
\sigma_{\phi}(\lambda)
=
\frac{2\pi\sigma_h}{\lambda}
\left|
\cos\theta_i+\cos\theta_s
\right|,
\label{eq:rms_phase_dispersion}
\end{equation}
where \(\theta_i\) and \(\theta_s\) are the incident and scattered angles,
respectively, measured from the local surface normal. Thus, phase dispersion
increases with the wavelength-normalized height variation
\(\sigma_h/\lambda\).
\end{lemma}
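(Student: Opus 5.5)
The plan is to compute the excess optical path length that a height perturbation introduces between an incident plane wave and a scattered plane wave, and then take the root-mean-square over the surface samples. We place the mean plane at \(z=0\) with unit normal \(\hat{\mathbf n}=\hat{\mathbf z}\), and model the surface point \(m\) as \(\mathbf r_m=(x_m,y_m,h_m)\). Let the incident and scattered wave vectors be \(\mathbf k_i=k\hat{\mathbf u}_i\) and \(\mathbf k_s=k\hat{\mathbf u}_s\), with \(k=2\pi/\lambda\). The incident wave travels toward the surface, so \(\hat{\mathbf u}_i\cdot\hat{\mathbf z}=-\cos\theta_i\), while \(\hat{\mathbf u}_s\cdot\hat{\mathbf z}=\cos\theta_s\).

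\textbf{Phase of one contribution.} The phase of the contribution reflected at \(\mathbf r_m\) is \((\mathbf k_i-\mathbf k_s)\cdot\mathbf r_m\), up to a common constant. Subtracting the contribution of the same transverse point on the mean plane isolates the height-induced term
\begin{equation*}
\Delta\phi_m = (\mathbf k_i-\mathbf k_s)\cdot\hat{\mathbf z}\,h_m = -k(\cos\theta_i+\cos\theta_s)\,h_m .
\end{equation*}
This is the familiar result that a height step \(h\) lengthens the incident path by \(h\cos\theta_i\) and the scattered path by \(h\cos\theta_s\). Two assumptions justify it: the far-field (plane-wave) approximation over the patch, and the local-planarity hypothesis of the lemma. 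The latter means the same \(\theta_i,\theta_s\) apply at every \(m\), so that slope variations are neglected.

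\textbf{Taking the RMS.} Because \(\Delta\phi_m\) is linear in \(h_m\) with a common coefficient, the zero-mean hypothesis \(\sum_m h_m=0\) gives zero mean phase deviation. The RMS phase is then \(\sigma_\phi=(M^{-1}\sum_m\Delta\phi_m^2)^{1/2}=k|\cos\theta_i+\cos\theta_s|\,\sigma_h\), which is exactly \eqref{eq:rms_phase_dispersion}. The final claim follows directly. For fixed geometry, \(\sigma_\phi\) is \(2\pi|\cos\theta_i+\cos\theta_s|\) times \(\sigma_h/\lambda\), a strictly increasing linear function, provided the angular factor is nonzero. As a consistency check, in the specular case \(\theta_s=\theta_i\) we get \(\sigma_\phi=4\pi\sigma_h\cos\theta_i/\lambda\). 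Its square is the exponent in \eqref{eq:specular_coherent_factor}, which agrees with \(\eta_{\mathrm{coh}}=e^{-\sigma_\phi^2}\) for Gaussian heights.

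\textbf{Main obstacle.} The hardest step is the sign and geometry bookkeeping that turns \(\mathbf k_i-\mathbf k_s\) into a \emph{sum} of cosines rather than a difference. This depends on measuring both angles from the normal on the illuminated side, with \(\mathbf k_i\) pointing into the surface. I will fix this convention explicitly and verify it with the two-ray path-difference picture. I will also state that the result is a first-order (Kirchhoff, no-shadowing, no multiple-scattering) phase model, since that is the only sense in which the lemma holds.
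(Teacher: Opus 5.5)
Your proposal is correct and follows essentially the same route as the paper. The paper argues directly that a height deviation $h_m$ changes the incident and scattered path lengths by $h_m\cos\theta_i$ and $h_m\cos\theta_s$ to first order, multiplies by $k=2\pi/\lambda$, and pulls the common factor $k|\cos\theta_i+\cos\theta_s|$ out of the RMS; your $(\mathbf k_i-\mathbf k_s)\cdot\hat{\mathbf z}\,h_m$ computation is the wave-vector form of that argument, and it simply makes explicit the sign convention the paper leaves as ``up to an overall sign'' and adds the nonzero-angular-factor caveat to the monotonicity claim.
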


\begin{IEEEproof}
Let $k=\frac{2\pi}{\lambda}$, denote the wavenumber. For a small local height deviation $h_m$ relative
to the propagation distances, the incident and scattered path lengths
change to first order by $h_m\cos\theta_i$ and $h_m\cos\theta_s$,
respectively, up to an overall sign determined by the direction of the
surface displacement. The resulting total path-length perturbation is
therefore
\[
\Delta \ell_m
=
h_m\left(\cos\theta_i+\cos\theta_s\right).
\]
The corresponding phase perturbation is
\[
\phi_m
=
k\Delta \ell_m
=
\frac{2\pi}{\lambda}
h_m\left(\cos\theta_i+\cos\theta_s\right).
\]
Since the height deviations are zero mean, the phase perturbations are
also zero mean and their RMS value becomes
\begin{align}
\sigma_{\phi}
&=
\left(
\frac{1}{M}\sum_{m=1}^{M}\phi_m^2
\right)^{1/2}
\\
&=
\left[
\frac{1}{M}\sum_{m=1}^{M}
\left(
\frac{2\pi}{\lambda}
h_m(\cos\theta_i+\cos\theta_s)
\right)^2
\right]^{1/2}
\\
&=
\frac{2\pi}{\lambda}
\left|\cos\theta_i+\cos\theta_s\right|
\left(
\frac{1}{M}\sum_{m=1}^{M}h_m^2
\right)^{1/2}.
\end{align}
Using the definition
\[
\sigma_h
=
\left(
\frac{1}{M}\sum_{m=1}^{M}h_m^2
\right)^{1/2}
\]
gives
\[
\sigma_{\phi}(\lambda)
=
\frac{2\pi\sigma_h}{\lambda}
\left|\cos\theta_i+\cos\theta_s\right|,
\]
which proves~\eqref{eq:rms_phase_dispersion}.

For specular reflection, $\theta_s=\theta_i$, and hence
\begin{equation}
\sigma_{\phi}(\lambda)
=
\frac{4\pi\sigma_h\cos\theta_i}{\lambda}.
\label{eq:specular_phase_rms}
\end{equation}
Thus, larger wavelength-normalized height variation produces larger
phase differences among contributions from different parts of the surface
and can reduce their coherent concentration in the specular direction.
\end{IEEEproof}
Lemma~\ref{lem:rms_phase_dispersion} highlights the role of wavelength.
If a surface is physically smooth relative to the much shorter LiDAR
wavelength, the same physical height variations are necessarily small
relative to the 60-GHz wavelength. However, a practical LiDAR point cloud
does not directly measure the microscopic $\sigma_h$ in
\eqref{eq:specular_coherent_factor}. Its observed geometry also depends
on ranging resolution, beam footprint, sampling density, viewing geometry,
edge mixing, and sensor processing. Consequently, high LiDAR planarity or
coherence does not by itself establish physical RF smoothness.

Likewise, LiDAR observations with greater geometric variation are not
discarded a priori, since structured or irregular surfaces may redistribute
RF energy into useful off-specular directions. We therefore treat the
LiDAR-derived quantities below as observation-domain surface cues and
determine their communication relevance experimentally rather than using
them as direct estimates of RF roughness or reflection loss.

% ----------------------------------------------------------
\subsection{LiDAR Surface Characterization for Communication}
\label{subsec:lidar_surface_descriptors}
% ----------------------------------------------------------

Let \(S\) denote a local LiDAR observation containing \(N_S\) returns, with
\(m\) indexing a return in \(S\). Each return provides the 3-D coordinate
\(\mathbf p_m=[x_m,y_m,z_m]^{\mathsf T}\), range \(d_{L,m}\), and
linear returned intensity \(I_{L,m}\).

A nominally range-normalized apparent-return quantity is defined as $w_{L,m}
=
I_{L,m}d_{L,m}^{2}.$
%
%\begin{equation}
%w_{L,m}
%=
%I_{L,m}d_{L,m}^{2}.
%\label{eq:range_normalized_definition}
%\end{equation}
This compensation removes only the nominal inverse-square range dependence.
The quantity can still depend on viewing angle, footprint, local surface
orientation, and sensor response and is therefore treated as an
apparent-return cue rather than an intrinsic material property.

\subsubsection{Geometric Return Dispersion}

To characterize how closely the measured LiDAR points follow a local plane,
let \(\delta h_m\) denote the perpendicular residual of return \(m\) from
the best-fit plane through \(S\). The geometric return dispersion is

\begin{equation}
R_{\mathrm{rms},L}(S)
=
\sqrt{
\frac{1}{N_S}
\sum_{m\in S}
\delta h_m^2
}.
\label{eq:lidar_return_dispersion}
\end{equation}
A smaller \(R_{\mathrm{rms},L}(S)\) indicates that the measured returns lie
more closely around a common local plane. This sensor-domain quantity is
distinct from the physical RMS height \(\sigma_h\) used in
Section~\ref{subsec:physical_motivation}.

\subsubsection{Local Surface Organization}

Local surface organization is characterized by how the LiDAR points spread
in space. For a horizontal LiDAR slice, let
\(\lambda_1\geq\lambda_2\geq0\) denote the point-cloud spread along its two
principal directions. A well-organized surface slice has a dominant spread
along the surface and relatively little spread transverse to it. The 2-D
coherence is defined as

\begin{equation}
\Pi(S)
=
\frac{\lambda_1-\lambda_2}
{\lambda_1+\lambda_2+\epsilon_\Pi},
\qquad
0\leq\Pi(S)\leq1,
\label{eq:lidar_surface_coherence}
\end{equation}
where \(\epsilon_\Pi>0\) prevents division by zero. A larger \(\Pi(S)\)
indicates a more clearly organized horizontal surface slice.

For a local 3-D point cloud, let
\(\lambda_1\geq\lambda_2\geq\lambda_3\geq0\) denote the spread along its
three principal directions. For a planar surface, the two largest spreads
lie primarily along the surface, whereas the smallest spread is associated
with the surface-normal direction. The 3-D planarity is defined as

\begin{equation}
\Pi_{3\mathrm D}(S)
=
\frac{\lambda_2-\lambda_3}
{\lambda_1+\epsilon_\Pi}.
\label{eq:lidar_3d_planarity}
\end{equation}
A larger \(\Pi_{3\mathrm D}(S)\) indicates that the measured returns more
closely form a local planar surface.

\subsubsection{Radiometric Return Structure}

Let the mean linear returned intensity within \(S\) be

\begin{equation}
\bar I_L(S)
=
\frac{1}{N_S}
\sum_{m\in S}I_{L,m}.
\label{eq:lidar_mean_intensity}
\end{equation}
The returned-intensity coefficient of variation is

\begin{equation}
\mathrm{CV}_{I}(S)
=
\frac{
\sqrt{
N_S^{-1}
\sum_{m\in S}
\left(I_{L,m}-\bar I_L(S)\right)^2
}
}{
\bar I_L(S)+\epsilon_I
},
\label{eq:lidar_intensity_cv}
\end{equation}
where \(\epsilon_I>0\) prevents numerical instability. A larger
\(\mathrm{CV}_{I}(S)\) indicates greater variation in the measured optical
return across \(S\).

Localized strong-return behavior is summarized by the linear max-to-mean
ratio

\begin{equation}
S_{\max/\mu}(S)
=
\frac{\max_{m\in S}I_{L,m}}
{\bar I_L(S)}.
\label{eq:lidar_max_mean_response}
\end{equation}

A larger $S_{\max/\mu}(S)$ indicates that one or more localized returns are
strong relative to the mean return. For compact reporting, the values in the
tables and figures are expressed in decibels as
$10\log_{10} S_{\max/\mu}(S)$.  Collectively, \(R_{\mathrm{rms},L}\), \(\Pi\), and
\(\Pi_{3\mathrm D}\) describe the measured geometric structure, while
\(\mathrm{CV}_{I}\) and \(S_{\max/\mu}\) describe radiometric structure.
The quantity \(w_{L,m}\) provides a separate range-compensated
apparent-return cue. These descriptors are evaluated individually as
surface-ranking cues rather than as direct predictors of RF received power.

% ----------------------------------------------------------
\subsection{Beam-Level Mapping and RF Verification}
\label{subsec:performance_measures}
% ----------------------------------------------------------

The LiDAR point cloud is more densely sampled than the RF beam codebook.
Accordingly, the descriptors above are evaluated over LiDAR returns
associated with each candidate RF receive direction.   Let $\phi_{L,m}$ denote
the measured azimuth of LiDAR return $m$, and let
$\mathcal{A}(\cdot)$ denote the angular-registration mapping from the LiDAR
coordinate frame to the RF coordinate frame. The registered LiDAR azimuth is
$
\widetilde{\phi}_{L,m}
=
\mathcal{A}\!\left(\phi_{L,m}\right).
$
Each return is then associated with the RF receive direction whose boresight
angle is nearest to the registered LiDAR azimuth,
\begin{equation}
r_m
=
\arg\min_{r\in\mathcal R}
d_{\mathrm{ang}}
\!\left(
\widetilde{\phi}_{L,m},\phi_r
\right),
\label{eq:lidar_rf_mapping}
\end{equation}
where $\phi_r$ is the boresight angle of receive direction $r$ and
$d_{\mathrm{ang}}(\cdot,\cdot)$ denotes wrapped angular distance.

Let \(\mathcal T\) and \(\mathcal R\) denote the evaluated TX and RX beam
codebooks, respectively. At measurement location \(i\), let
\(P_{i,t,r}\) denote the measured linear RF power for beam pair
\((t,r)\), where \(t\in\mathcal T\) and \(r\in\mathcal R\).
The exhaustive-search reference is

\begin{equation}
P_i^\star
=
\max_{t\in\mathcal T,\;r\in\mathcal R}
P_{i,t,r}.
\label{eq:rf_reference}
\end{equation}

For TX-independent LiDAR selection, let
\(\mathcal S_i^R(K_R)\subseteq\mathcal R\) contain the \(K_R\)
highest-ranked RX directions. All TX beams are then swept for the retained
RX directions, giving

\begin{equation}
P_{i,\mathrm{sel}}
=
\max_{\substack{
t\in\mathcal T\\
r\in\mathcal S_i^R(K_R)
}}
P_{i,t,r},
\label{eq:selected_rf_power}
\end{equation}
where \(K_R\) is the number of retained RX directions.  The beamforming loss relative to exhaustive search is $L_i
=
10 \log_{10}\left(
\frac{P_i^\star}{P_{i,\mathrm{sel}}}
\right)$ dB.

\section{Experimental Methodology}
\label{sec:experimental_methodology}

The experimental setup spans three campaigns of increasing complexity.
The first establishes a near-normal LiDAR reference over representative
indoor surfaces. The second uses matched full-3-D LiDAR observations of five
surfaces under an approximately \(45^\circ\) geometry and relates them to a
controlled 60-GHz NLoS field in an L-shaped corridor. The third evaluates
LiDAR-guided beam selection in a separate room-scale environment using
exhaustive TX--RX beam measurements and full 3-D LiDAR scans. The
near-normal and oblique LiDAR measurements jointly address RQ1, the matched
LiDAR--mmWave corridor measurements address RQ2, and the room-scale
campaign addresses RQ3.

\subsection{Near-Normal LiDAR Surface Campaign}
\label{subsec:normal_campaign}

A Quanergy M8 LiDAR is used to characterize 16 indoor surfaces under an
approximately near-normal observation geometry. Representative drywall and
copper measurement setups are shown in Fig.~\ref{fig:lidar_setup}.  The near-normal reference is collected at approximately 3~ft~4~in distance. The full
3-D surface point cloud is used to compute the plane-fit RMS and the
radiometric descriptors. This campaign establishes controlled
surface-dependent observations; it is not an RF reflector test or a semantic
material-classification experiment.

\begin{figure}[t]
\centering
\begin{minipage}[t]{0.48\columnwidth}
\centering
\InsertCampaignFigure{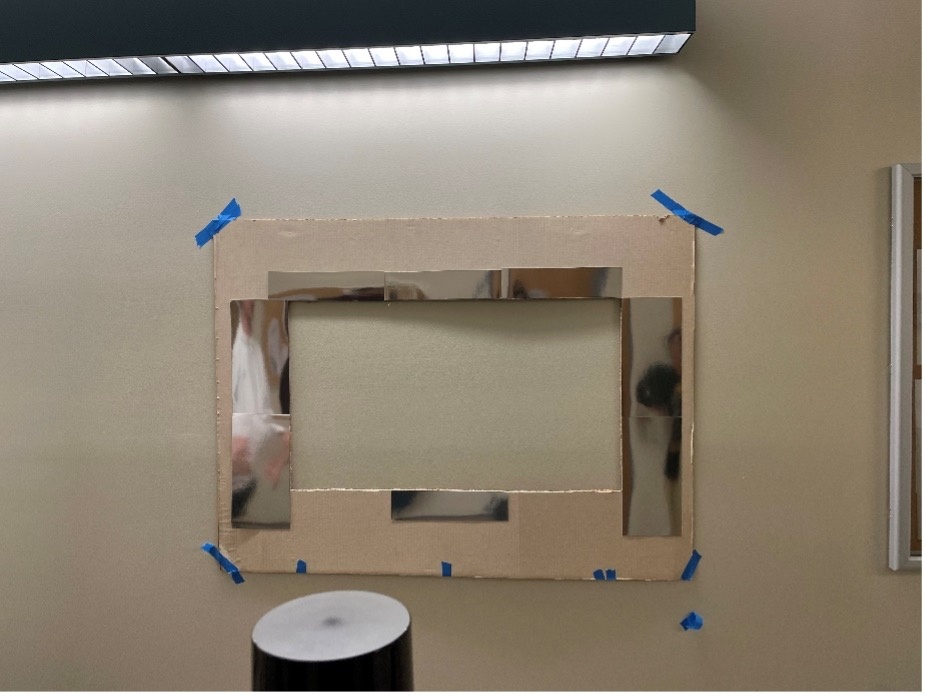}{\linewidth}
{Drywall LiDAR material-characterization setup.}
\par\vspace{2pt}\small (a) Drywall.
\end{minipage}
\hfill
\begin{minipage}[t]{0.48\columnwidth}
\centering
\InsertCampaignFigure{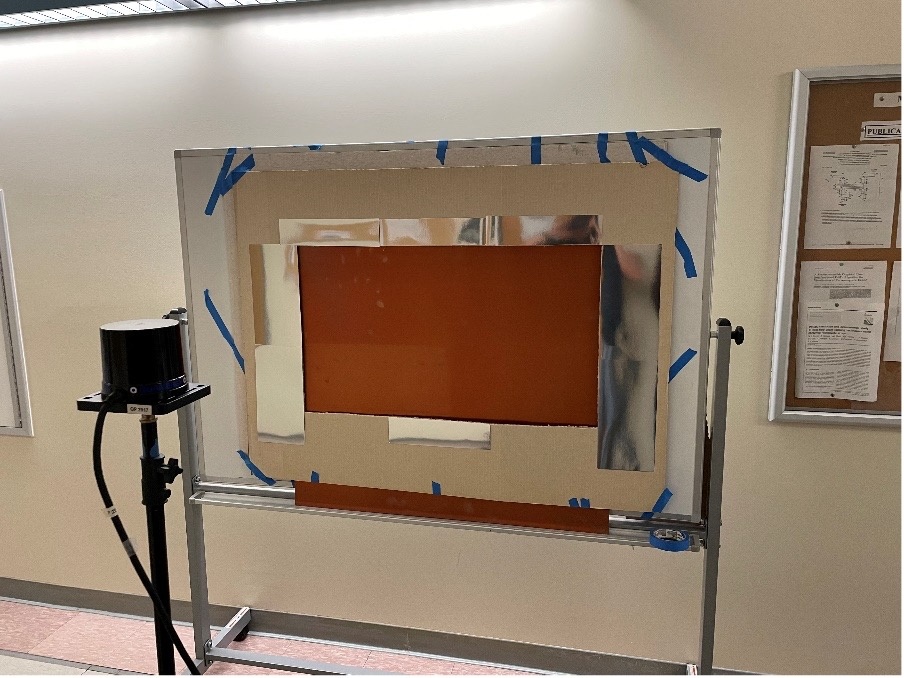}{\linewidth}
{Copper LiDAR material-characterization setup.}
\par\vspace{2pt}\small (b) Copper.
\end{minipage}
\caption{Controlled near-normal LiDAR surface-characterization setup.}
\label{fig:lidar_setup}
\end{figure}

\subsection{Oblique 3-D LiDAR/mmWave Corridor Campaign}
\label{subsec:angled_campaign}

Five matched surfaces (copper, silver, rough wood, cardboard, and smooth
compressed wood) are observed at approximately \(45^\circ\) in an L-shaped
corridor. The 60-GHz measurements use Sivers phased-array evaluation kits. The direct TX--RX path is blocked, the test surface is mounted near
the corner, and the 60-GHz receiver is moved over 102 locations in the blocked
branch. The TX is co-located with the Quanergy M8 LiDAR. This campaign is designed to test
whether LiDAR-observed surface characteristics under an oblique geometry are
associated with the surface-mediated 60-GHz response under a controlled NLoS
configuration. Figure~\ref{fig:lcorr} shows the measurement setup.

\begin{figure}[t]
\centering
\InsertCampaignFigure{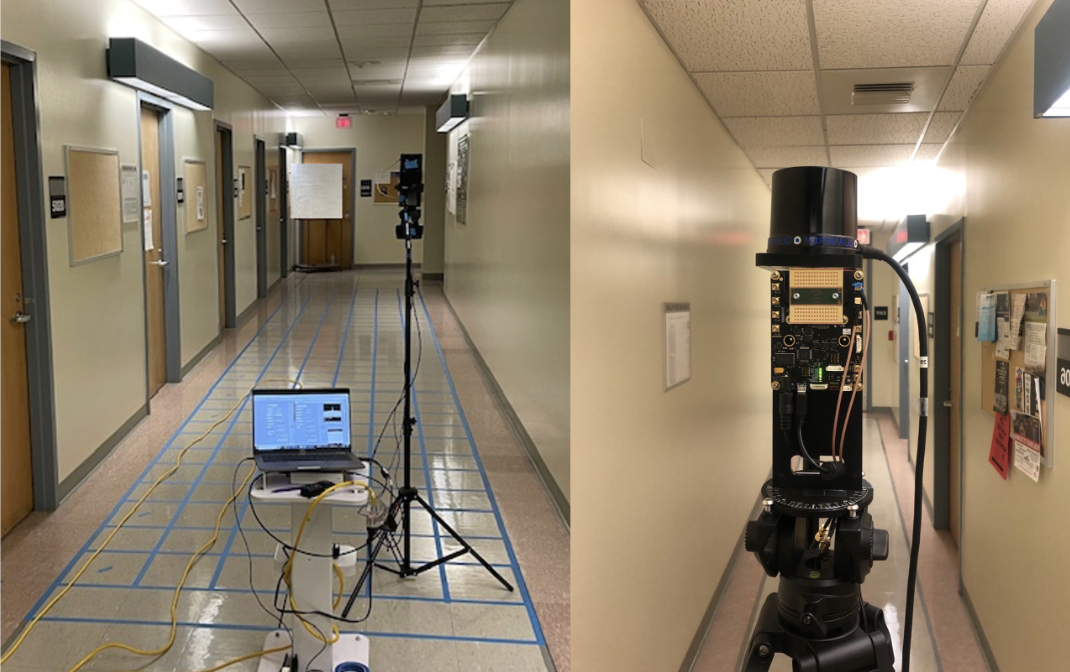}{\columnwidth}
{L-shaped corridor measurement setup.}
\caption{Controlled LiDAR/mmWave corridor campaign. The LiDAR and
60-GHz transmitter are co-located approximately 14~ft~10~in from the test
surface, which is placed near the corner at approximately \(45^\circ\).
The RF receiver samples the resulting surface-mediated field over the
blocked hallway grid.}
\label{fig:lcorr}
\end{figure}

Two full-3-D oblique LiDAR observations are available: a short-range scan at
approximately 3~ft~8~in and a long-range observation corresponding to the
corridor configuration. Together with the near-normal 3~ft~4~in reference,
these observations provide two distinct comparisons. Near-normal versus
short-range oblique primarily tests a viewing-condition change at comparable
range, whereas short- versus long-range oblique tests the effect of a larger
change in acquisition scale while retaining the approximately \(45^\circ\)
surface orientation. Because the sensing conditions and illuminated
footprints differ, these are controlled comparisons rather than a pure
one-factor calibration. The RF experiment keeps the surface position and orientation fixed
while changing the material. All cross-surface RF comparisons use a fixed TX AoD.

\subsection{Independent Room-Scale Campaign}
\label{subsec:room_campaign}

The third campaign uses the same Sivers 60-GHz phased-array kits and a
Quanergy M8 co-located with the RF receiver. The TX remains fixed while the
RX/LiDAR assembly is moved over the room grid in Fig.~\ref{fig:room_env}.
The TX codebook contains 19 sectors spanning
\(-45^\circ:5^\circ:+45^\circ\), and the RX codebook contains 36 sectors
spanning \(-170^\circ:10^\circ:+180^\circ\). A complete
\(19\times36\) RF beam-power map and a full 3-D LiDAR scan are available at
each of the 55 measured RX locations.

\begin{figure}[t]
\centering
\InsertCampaignFigure{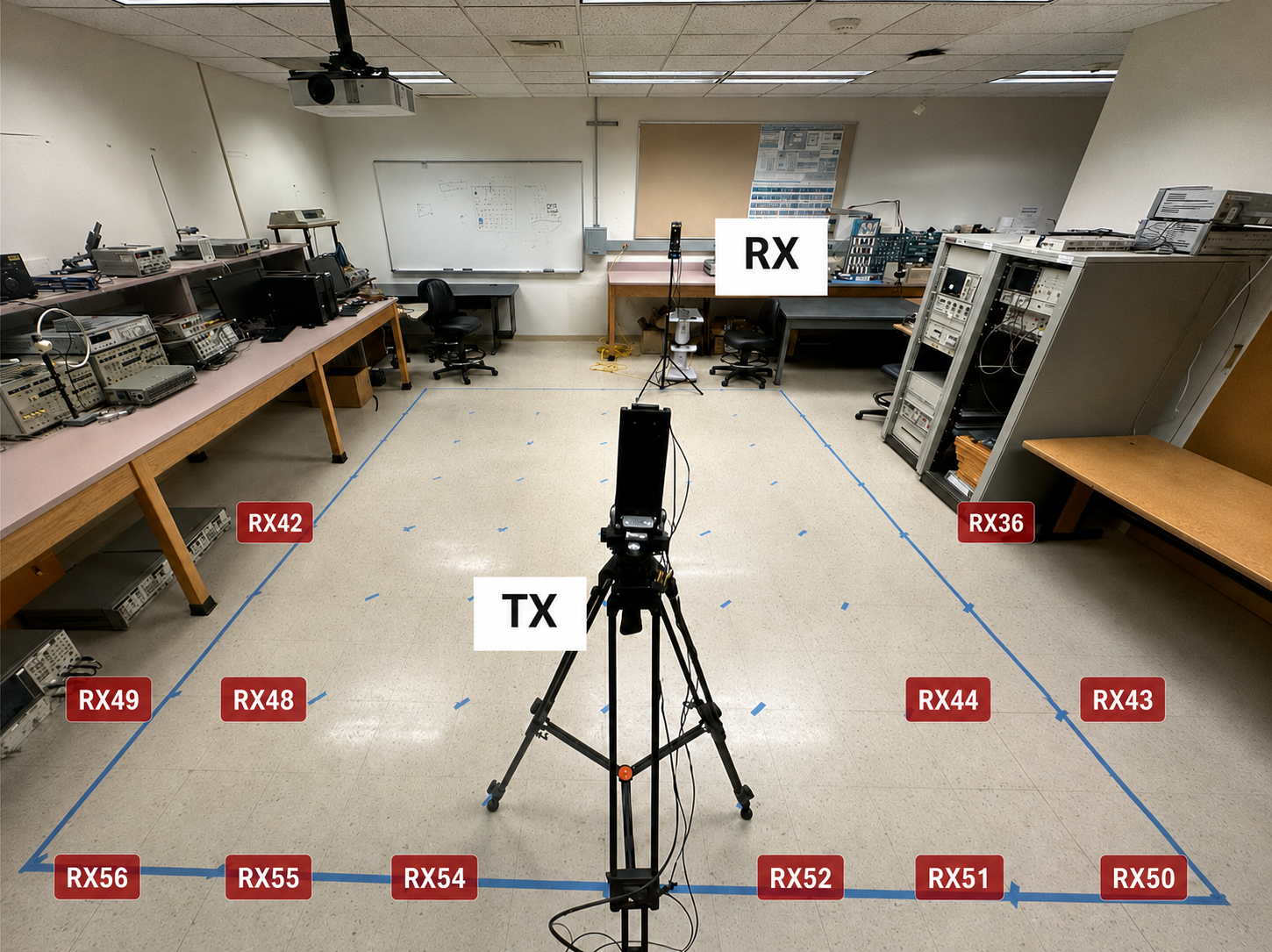}{\columnwidth}
{Room-scale measurement environment.}
\caption{Room-scale LiDAR/mmWave environment. The TX is fixed while the
co-located RX and LiDAR are moved across the measurement grid. Labeled
strict-NLoS locations are the primary blockage subset.}
\label{fig:room_env}
\end{figure}

The strict-NLoS subset contains 12 locations: rx36, rx42--44, rx48--52, and
rx54--56. Forty-one locations are labeled LoS: rx01--35, rx37--41, and rx46.
Locations rx45 and rx47 are mixed.

\subsection{Local 3-D Beam Representation and Evaluation Protocol}
\label{subsec:beam_search_protocol}

To form an NLoS-emphasized search region, the forward-facing RX angles
toward the TX are pruned at the LoS locations, leaving the side/rear
directions for secondary-path evaluation. The resulting receive set is $
\mathcal R_N
=
\{r\in\mathcal R_{\rm all}:|\phi_r|>50^\circ\}.$

The primary 3-D representation uses LiDAR rings 5--7, providing local
vertical support around ring 6 (horizontal slice) while limiting the mixing of unrelated
surfaces. Ring-6-only and all-ring representations are retained for
comparison. To control point-density effects, the sample-count-sensitive descriptors are
computed from a common \(N_0=64\) returns per candidate direction. This
choice retains all strict-NLoS candidates while providing sufficient support
for the local 3-D descriptors. The subsampling is repeated 200 times, and the
median value is retained. The evaluated descriptors include 3-D plane-fit RMS,
\(\mathrm{CV}_I\), \(S_{\max/\mu}\), range-normalized apparent return,
and local 3-D planarity \(\Pi_{3\mathrm D}\). 

The evaluated descriptors include 3-D plane-fit RMS, \(\mathrm{CV}_I\),
\(S_{\max/\mu}\), range-normalized apparent return, and local 3-D planarity
\(\Pi_{3\mathrm D}\). TX-independent selection retains the \(K_R=7\)
highest-ranked RX directions and sweeps all 19 TX beams for each retained
direction. The TX location is not used in the LiDAR ranking.  For held-out evaluation, the ranking cue is selected using all other
locations in the corresponding evaluation subset and is then applied to the
held-out location. This leave-one-location-out procedure is repeated for
each location.

% ==========================================================
% IV. CONTROLLED LIDAR SURFACE-SENSING VALIDATION
% ==========================================================

\section{Controlled LiDAR Surface-Sensing Validation}
\label{sec:controlled_validation}

This section presents the controlled results for RQ1 and RQ2.

\subsection{RQ1: 3-D Surface Description Across View and Range}
\label{subsec:rq1_surface_description}

The near-normal campaign first establishes the broader 16-surface reference
in Table~\ref{tab:normal_surface_comparison}. The measurements exhibit
substantial surface-dependent variation in the LiDAR return statistics.
In particular, $S_{\max/\mu}$ spans from $0.99$~dB for carpet to
$22.23$~dB for whiteboard, while $\mathrm{CV}_{I}$ ranges from $0.02$
to $4.13$. Whiteboard, TV screen, linoleum, silver, metal tin, and copper
exhibit substantially larger localized radiometric contrast than most of
the remaining surfaces. This behavior is qualitatively consistent with the observations reported
in SpaceBeam~\cite{woodford2021spacebeam}. In Table~1 of that work,
the authors characterize infrared specularity using the ratio of maximum
measured intensity to average intensity and observe that surfaces with
higher infrared specularity tend to exhibit lower measured mmWave
reflection loss. The numerical values should not be compared directly,
however, because the sensing hardware, acquisition geometry, sampled
surface regions, and RF measurement conditions differ. More importantly,
the optical and mmWave responses are not physically equivalent, so the
comparison supports only the broader observation that LiDAR radiometric
structure can contain surface-dependent information relevant to RF
propagation.

\begin{table}[t]
\centering
\scriptsize
\caption{Near-normal LiDAR descriptors for 16 indoor surfaces. Surfaces
marked by \(^{*}\) are retained for the matched 3-D oblique experiment.}
\label{tab:normal_surface_comparison}
\setlength{\tabcolsep}{8pt}
\renewcommand{\arraystretch}{1.06}
\begin{tabular}{lrrrr}
\toprule
Surface & \(N_S\) & \(R_{\mathrm{rms},L}\) (mm) &
\(\mathrm{CV}_{I}\) & \(S_{\max/\mu}\) (dB)\\
\midrule
Whiteboard          & 4940 & 10.59 & 4.13 & 22.23\\
TV screen           & 4993 & 13.11 & 3.00 & 16.50\\
Linoleum            & 5189 & 12.43 & 1.65 & 15.02\\
Silver\(^{*}\)      & 5066 & 25.34 & 2.51 & 14.01\\
Metal tin           & 4780 & 14.68 & 2.38 & 12.27\\
Copper\(^{*}\)      & 4896 & 23.44 & 2.49 & 11.28\\
Drywall             & 5032 & 10.01 & 0.45 & 4.83\\
Rough wood\(^{*}\)  & 5048 & 11.57 & 0.47 & 4.07\\
Cardboard\(^{*}\)   & 5010 & 9.96  & 0.36 & 3.79\\
Styrofoam           & 4977 & 10.21 & 0.40 & 3.72\\
Corkboard           & 4965 & 9.95  & 0.37 & 3.71\\
Projector screen    & 4994 & 9.58  & 0.40 & 3.63\\
Smooth wood\(^{*}\) & 5073 & 9.54  & 0.38 & 3.60\\
Fabric pinboard     & 4984 & 9.88  & 0.42 & 3.49\\
Concrete wall       & 5025 & 11.11 & 0.39 & 3.43\\
Carpet              & 3193 & 15.67 & 0.02 & 0.99\\
\bottomrule
\end{tabular}
\end{table}

\begin{figure}[t]
\centering
\subfloat[Copper]{
\includegraphics[width=0.47\columnwidth]
{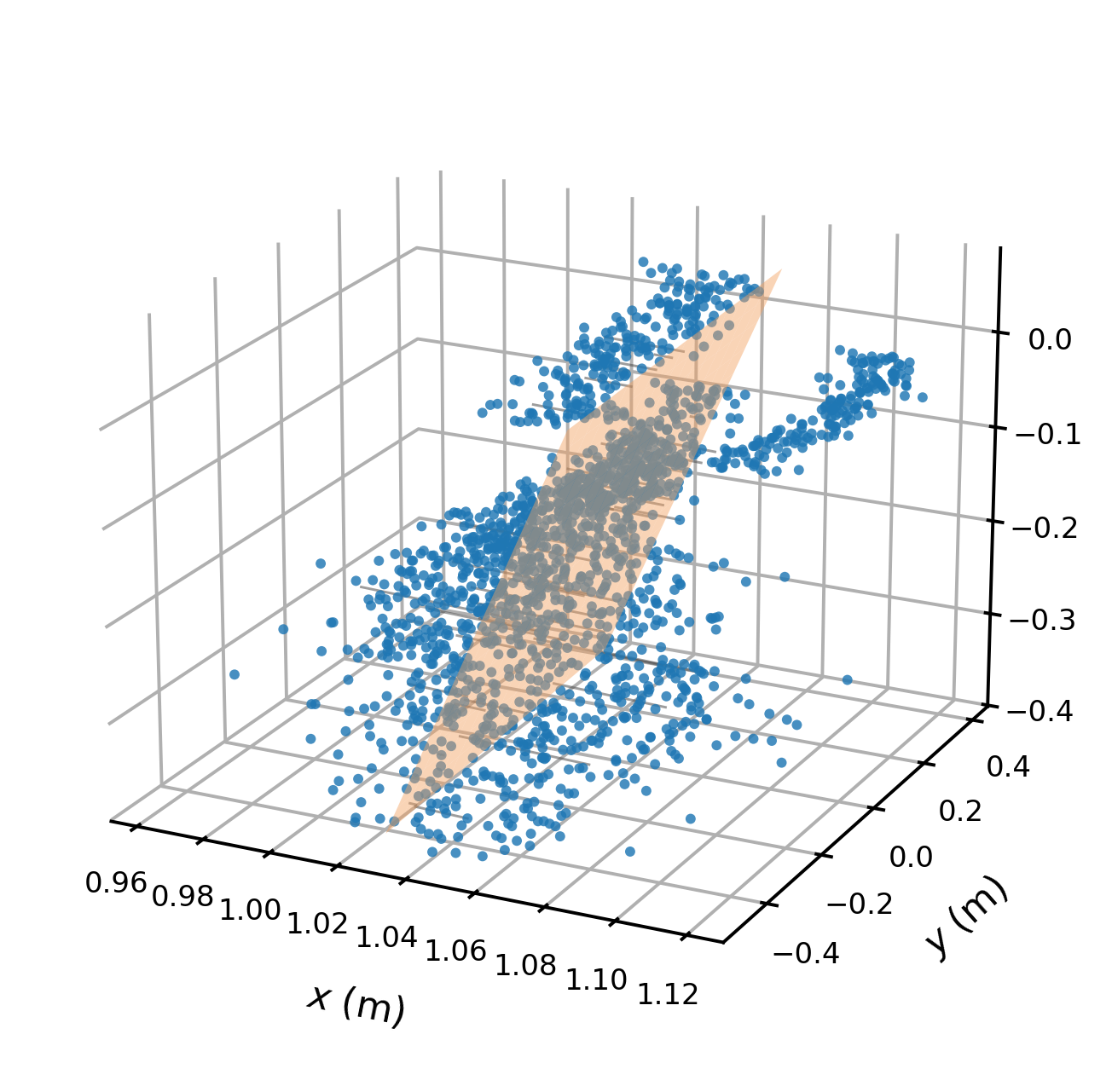}}
\hfill
\subfloat[Silver]{
\includegraphics[width=0.47\columnwidth]
{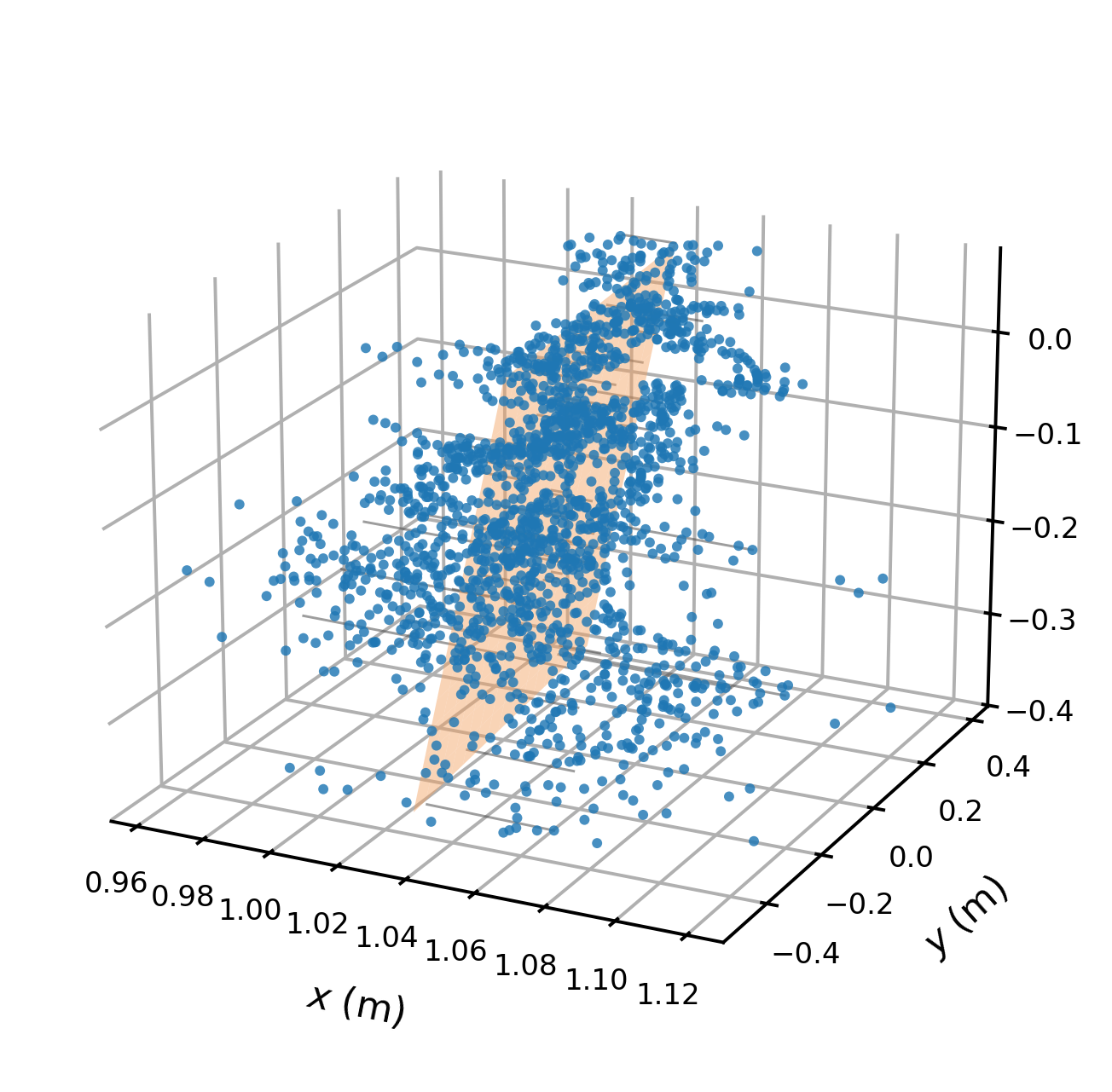}}
\par\vspace{1mm}
\subfloat[Cardboard]{
\includegraphics[width=0.47\columnwidth]
{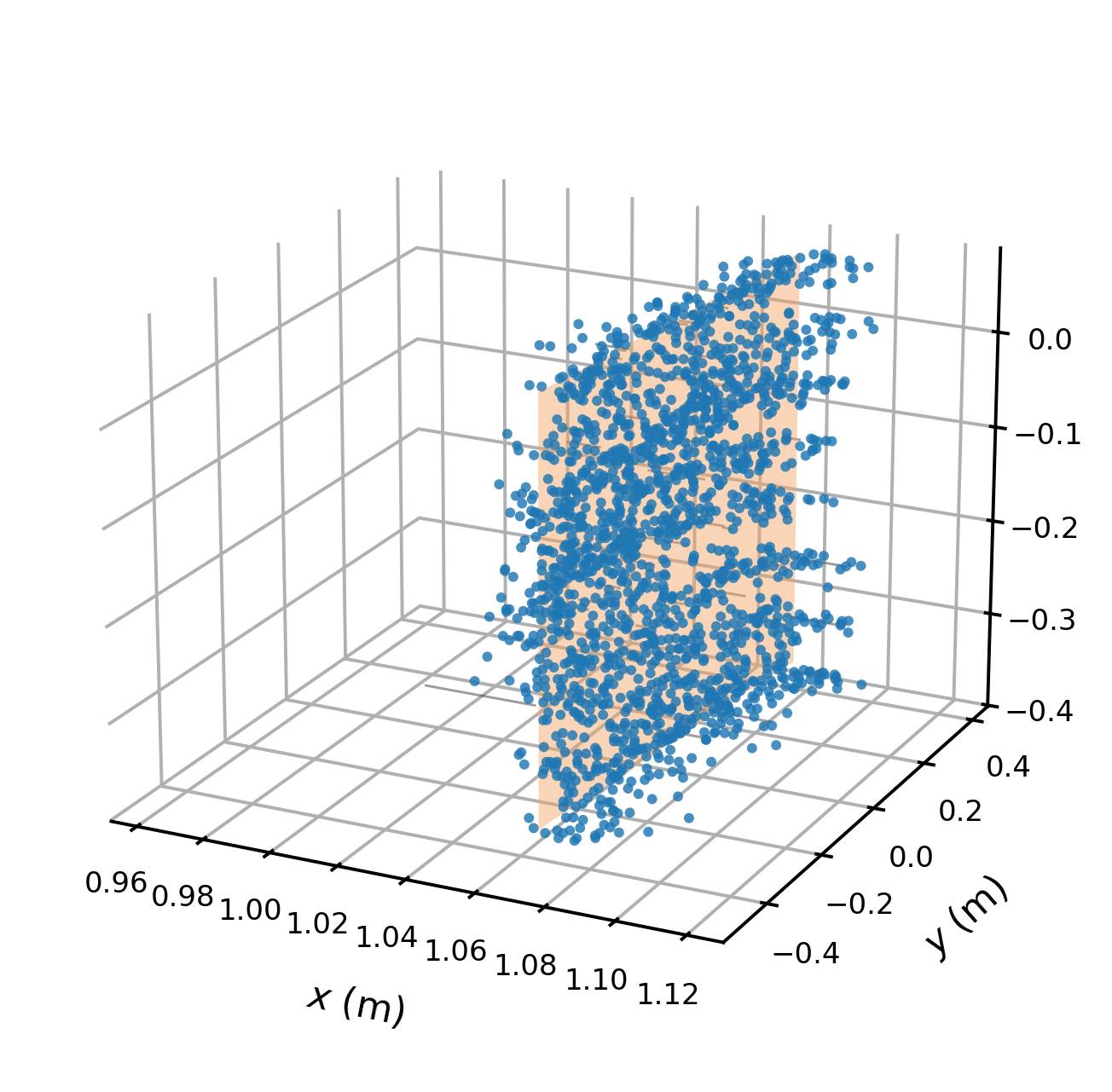}}
\hfill
\subfloat[Rough wood]{
\includegraphics[width=0.47\columnwidth]
{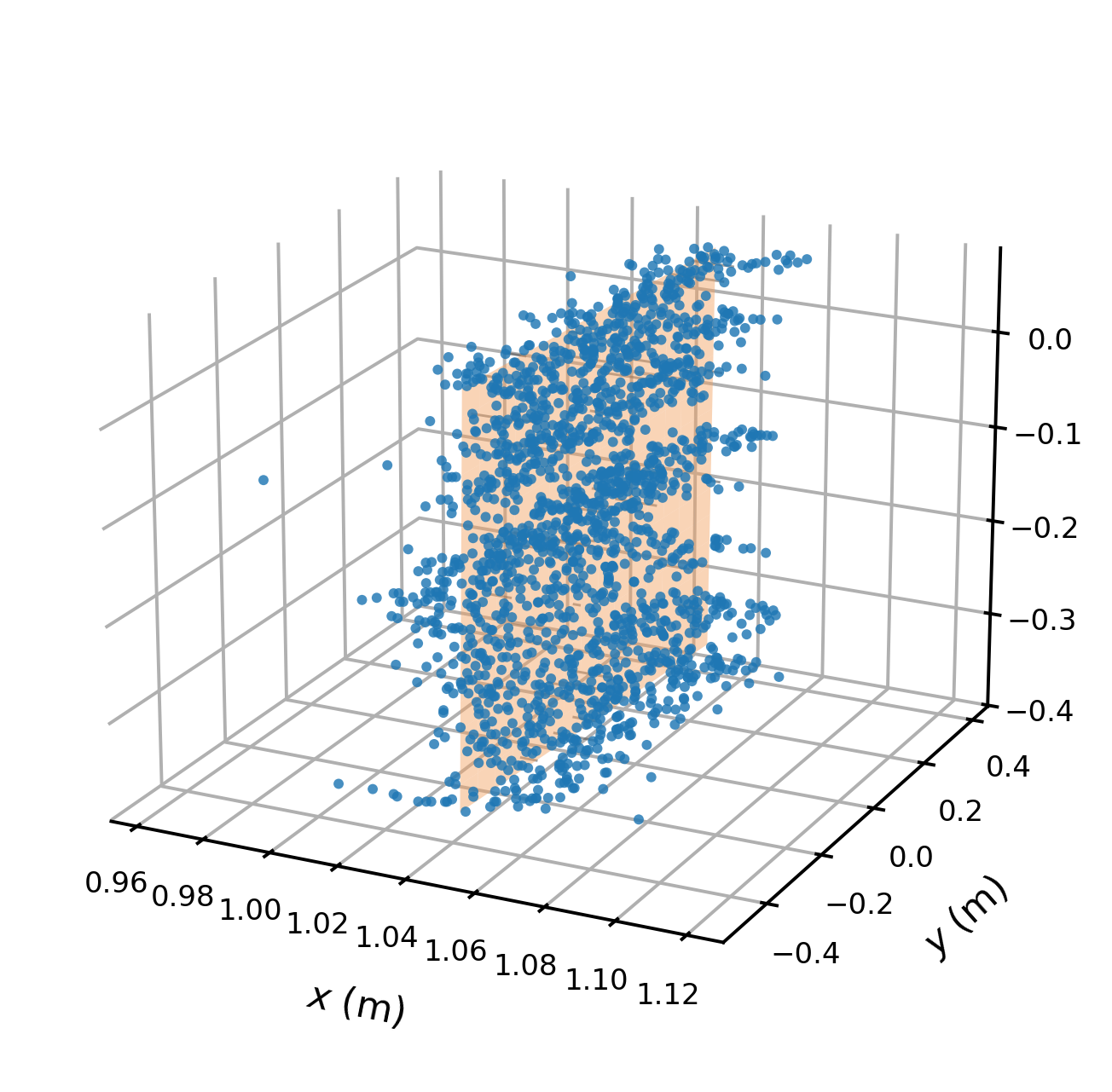}}
\caption{Representative near-normal 3-D LiDAR surface fits. Large residual
dispersion is a property of the measured point cloud and is not interpreted
as RF-scale physical roughness.}
\label{fig:plane_fit_rms_examples}
\end{figure}

\begin{figure*}[t]
\centering
\subfloat[3-D geometric RMS.]{
\includegraphics[width=0.475\textwidth]
{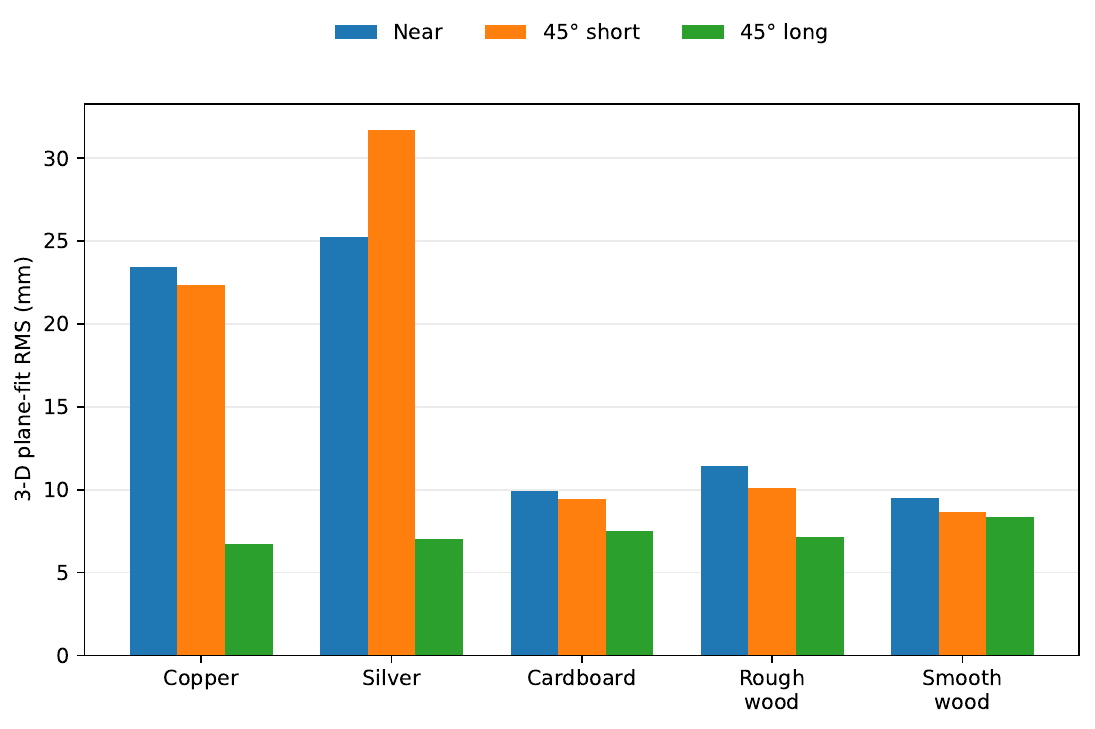}}
\hfill
\subfloat[Max-to-mean radiometric contrast.]{
\includegraphics[width=0.475\textwidth]
{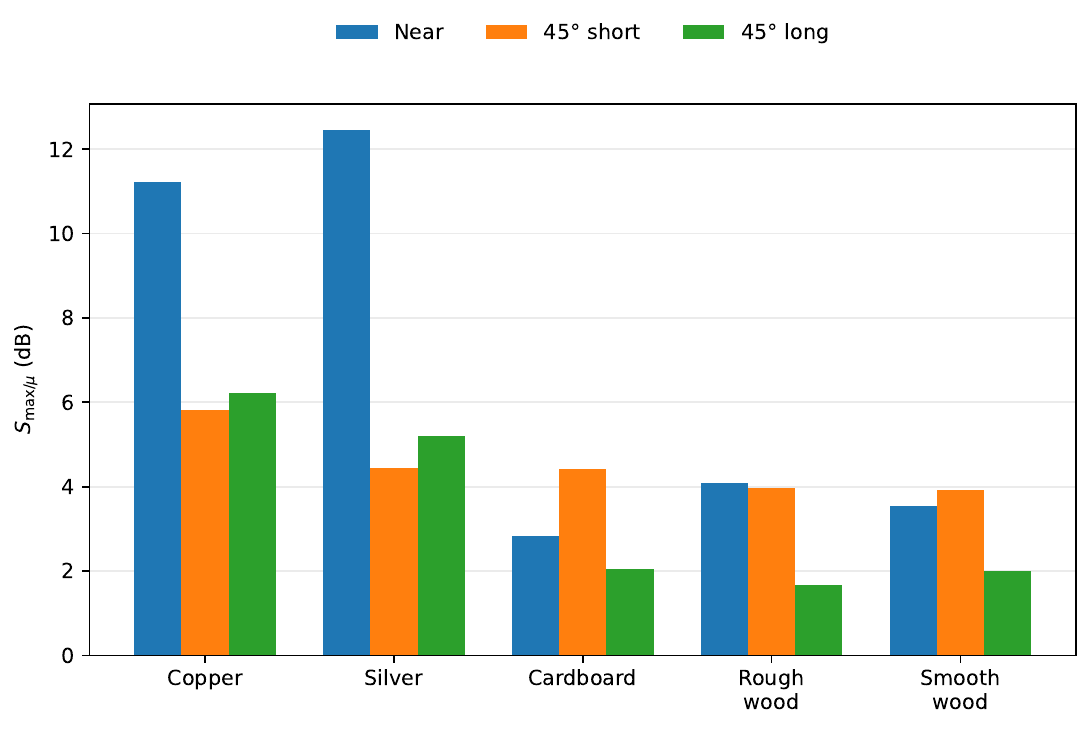}}
\caption{Three-condition full-3-D LiDAR comparison. RMS preserves ordering
across the comparable-range angular change but changes strongly at the long
range; \(S_{\max/\mu}\) retains stronger ordering between the two oblique
ranges.}
\label{fig:three_condition_3d}
\end{figure*}

Figure~\ref{fig:plane_fit_rms_examples} provides representative examples of
the near-normal 3-D plane fits underlying the geometric RMS descriptor.
Copper and silver exhibit larger residual point-cloud dispersion
($23.44$ and $25.34$~mm, respectively) than cardboard and rough wood
($9.96$ and $11.57$~mm, respectively). These differences characterize the
measured LiDAR point-cloud geometry under the given acquisition condition;
they should not be interpreted as differences in microscopic physical
surface roughness at the 60-GHz wavelength scale.  For the five surfaces, Table~\ref{tab:three_condition_3d} compares
the near-normal reference with two full-3-D oblique observations. Throughout
the controlled analysis, \(r\) denotes the Pearson correlation coefficient
and \(\rho\) denotes the Spearman rank correlation coefficient. Because only
five matched surfaces are available, these correlations are interpreted
descriptively.  A three-condition comparison is used rather than a single near/oblique pair
because the long-range observation changes both viewing geometry and
acquisition scale. Geometric RMS exhibits complete rank preservation between
the near-normal and short-range oblique observations (\(\rho=1.00\)), for
which the sensing ranges are comparable. In contrast, the RMS ordering
reverses between the short- and long-range oblique observations
(\(\rho=-0.90\)). This reversal indicates that the long-range change cannot
be attributed to incidence angle alone. Changes in LiDAR footprint, sampling
density, ranging behavior, retained surface support, and sensor processing
with acquisition scale can also affect the measured return-dispersion
descriptor.

\begin{table*}[t]
\centering
\scriptsize
\caption{Full-3-D LiDAR descriptors and cross-condition associations for the
five matched surfaces. The near-normal and short-range oblique observations
were collected at comparable ranges, whereas the third observation represents
the long-range oblique acquisition condition. Correlations are descriptive
because \(n=5\).}
\label{tab:three_condition_3d}
\setlength{\tabcolsep}{3.2pt}
\renewcommand{\arraystretch}{1.05}

\textbf{(a) Full-3-D LiDAR descriptors}\\[2pt]

\begin{tabular}{lrrrrrrrrr}
\toprule
& \multicolumn{3}{c}{Near, 3 ft 4 in}
& \multicolumn{3}{c}{45\(^{\circ}\), 3 ft 8 in}
& \multicolumn{3}{c}{45\(^{\circ}\), long range} \\
\cmidrule(lr){2-4}
\cmidrule(lr){5-7}
\cmidrule(lr){8-10}
Surface
& RMS & CV & \(S_{\max/\mu}\)
& RMS & CV & \(S_{\max/\mu}\)
& RMS & CV & \(S_{\max/\mu}\) \\
& (mm) & & (dB)
& (mm) & & (dB)
& (mm) & & (dB) \\
\midrule
Copper
& 23.46 & 2.493 & 11.23
& 22.36 & 0.519 & 5.81
& 6.70 & 0.747 & 6.22 \\

Silver
& 25.27 & 2.493 & 12.45
& 31.69 & 0.310 & 4.45
& 7.00 & 0.938 & 5.21 \\

Cardboard
& 9.94 & 0.361 & 2.83
& 9.42 & 0.496 & 4.42
& 7.53 & 0.305 & 2.06 \\

Rough wood
& 11.44 & 0.472 & 4.08
& 10.12 & 0.395 & 3.96
& 7.12 & 0.271 & 1.67 \\

Smooth wood
& 9.51 & 0.375 & 3.55
& 8.68 & 0.513 & 3.91
& 8.35 & 0.347 & 2.01 \\
\bottomrule
\end{tabular}

\vspace{5pt}

\textbf{(b) Cross-condition correlation \((r,\rho)\)}\\[2pt]

\begin{tabular}{lccc}
\toprule
Descriptor
& Near--45\(^{\circ}\) short
& Near--45\(^{\circ}\) long
& 45\(^{\circ}\) short--long \\
\midrule
RMS
& \((0.969,\;1.000)\)
& \((-0.738,\;-0.900)\)
& \((-0.643,\;-0.900)\) \\

\(\mathrm{CV}_{I}\)
& \((-0.338,\;-0.103)\)
& \((0.967,\;0.667)\)
& \((-0.452,\;0.000)\) \\

\(S_{\max/\mu}\)
& \((0.650,\;0.600)\)
& \((0.953,\;0.500)\)
& \((0.839,\;0.900)\) \\
\bottomrule
\end{tabular}
\end{table*}

Figure~\ref{fig:three_condition_3d} illustrates the different sensitivities
of the geometric and radiometric descriptors. Geometric RMS preserves the
surface ordering between the near-normal and short-range oblique measurements
($\rho=1.00$), which were collected at comparable ranges, but the ordering
changes substantially at the long-range condition. In contrast,
$S_{\max/\mu}$ changes in magnitude across the measurements but retains a
strong rank association between the two oblique ranges ($\rho=0.90$).
Intensity CV is less consistent across the three conditions. These results
indicate that no single descriptor is robust to all acquisition changes:
geometric RMS is more stable under the tested angular change at comparable
range, whereas max-to-mean radiometric contrast better preserves surface
ranking across the tested oblique range change.

\subsection{RQ2: Controlled 3-D LiDAR/mmWave Association}
\label{subsec:rq2_corridor}

The RF experiment uses the same prescribed corner surface and a fixed TX
AoD. The mmWave transmitter is positioned approximately 14~ft~10~in from
the surface, while the receiver is moved over the 102-point blocked-region
grid in the orthogonal corridor branch. This placement avoids an artificially
close TX--surface configuration and better reflects the operating regime of
interest, in which a useful passive surface may be separated from the
transmitter by several meters. The controlled geometry emphasizes a
surface-mediated TX--surface--RX interaction while allowing the spatial
structure of the resulting 60-GHz field to be measured throughout the
blocked region.

Table~\ref{tab:corridor_rss} summarizes the resulting RF field over the
102-point blocked-region receiver grid. For each surface, coverage is defined
as the percentage of receiver-grid locations at which the measured relative
RSS exceeds \(-70\)~dB. Silver and copper provide the strongest average
responses among the five matched surfaces and each maintains RSS above
\(-70\)~dB over 90.2\% of the blocked-region grid. In contrast, cardboard
and smooth wood produce substantially weaker average fields, with coverage
of 44.1\% and 49.0\%, respectively. For weakly reflecting or partially
transmitting surfaces, the measured field may also contain contributions
from structures behind the test surface and from residual environmental
multipath. The reported RSS should therefore be interpreted as the received
field with the test surface present rather than as an isolated material
reflection coefficient. The substantial spatial variation across the receiver
grid motivates examining RF spatial statistics in addition to mean received
power.

\begin{table}[t]
\centering
\scriptsize
\caption{L-shaped corridor RF response for the five surfaces having matched
3-D LiDAR measurements. Coverage (Cov.) is the percentage of the 102
blocked-region receiver locations with relative RSS greater than \(-70\)~dB.}
\label{tab:corridor_rss}
\setlength{\tabcolsep}{4.2pt}
\renewcommand{\arraystretch}{1.05}
\begin{tabular}{lrrrr}
\toprule
Surface & Mean & Median & Peak & Cov. \(>-70\)\\
& \multicolumn{3}{c}{RSS (dB)} & (\%)\\
\midrule
Silver      & -47.84 & -53.99 & -36.46 & 90.20\\
Copper      & -49.89 & -56.70 & -40.17 & 90.20\\
Rough wood  & -60.13 & -64.48 & -48.12 & 78.43\\
Cardboard   & -66.72 & -71.70 & -57.41 & 44.12\\
Smooth wood & -68.05 & -70.17 & -60.07 & 49.02\\
\bottomrule
\end{tabular}
\end{table}

Table~\ref{tab:cross_modal_3d} compares the long-range full-3-D
LiDAR descriptors with the measured 60-GHz response. Mean RSS characterizes
the average received field over the blocked-region grid, the spatial standard
deviation measures variation across the grid, and the robust
\(P_{90}-P_{10}\) spread reduces the influence of isolated extreme
measurements. Because only five matched surfaces are available, the reported
Pearson and Spearman coefficients are interpreted descriptively rather than
as population-level material relationships.

\begin{table}[t]
\centering
\scriptsize
\caption{Association between the matched long-range full-3-D LiDAR
descriptors and the measured 60-GHz surface-mediated field. RF spatial
standard deviation and \(P_{90}-P_{10}\) are computed over the 102-point
blocked-region receiver grid. }
\label{tab:cross_modal_3d}
\setlength{\tabcolsep}{4.3pt}
\renewcommand{\arraystretch}{1.05}
\begin{tabular}{llrr}
\toprule
LiDAR descriptor & RF metric & \(r\) & \(\rho\)\\
\midrule
\(R_{\mathrm{rms},L}\)
    & Mean RSS
    & -0.838 & -0.900\\
\(R_{\mathrm{rms},L}\)
    & RF spatial std.
    & -0.653 & -0.800\\
\(R_{\mathrm{rms},L}\)
    & RF \(P_{90}-P_{10}\)
    & -0.636 & -0.600\\
\midrule
\(\mathrm{CV}_{I}\)
    & Mean RSS
    & 0.907 & 0.600\\
\(S_{\max/\mu}\)
    & Mean RSS
    & 0.893 & 0.600\\
\(S_{\max/\mu}\)
    & RF \(P_{90}-P_{10}\)
    & 0.865 & 0.900\\
\bottomrule
\end{tabular}
\end{table}
The measurements show that geometric LiDAR return dispersion should
not be interpreted as a scalar reflector-quality metric. Across the five
surfaces, \(R_{\mathrm{rms},L}\) is negatively associated with mean RSS
(\(r=-0.838,\rho=-0.900\)) and with the RF spatial-dispersion measures.
Thus, a larger residual about the fitted LiDAR plane does not imply a
stronger 60-GHz response under this measurement condition.

The radiometric descriptors exhibit a different relationship. Intensity CV
has \(r=0.907\) with mean RSS, while \(S_{\max/\mu}\) has \(r=0.893\) with
mean RSS and a strong rank association with the robust RF spatial spread
(\(\rho=0.900\) for \(P_{90}-P_{10}\)). These observations indicate that
the LiDAR return distribution contains surface-dependent information
associated with the measured 60-GHz field, even though the optical and RF
responses are not physically equivalent.

Figure~\ref{fig:lshape_3d_crossmodal} illustrates these matched cross-modal
relationships. In particular, copper and silver, which produce the strongest
average 60-GHz responses among the five matched surfaces, also exhibit the
two largest \(S_{\max/\mu}\) values. The remaining three surfaces exhibit
smaller max-to-mean contrast and weaker RF responses. Thus,
\(S_{\max/\mu}\) does not predict received-power magnitude or reproduce the
exact RF ordering, but it preserves useful coarse surface ranking for
communication-oriented candidate selection.

\begin{figure*}[t]
\centering
\subfloat[3-D RMS versus mean RSS.]{
\includegraphics[width=0.32\textwidth]
{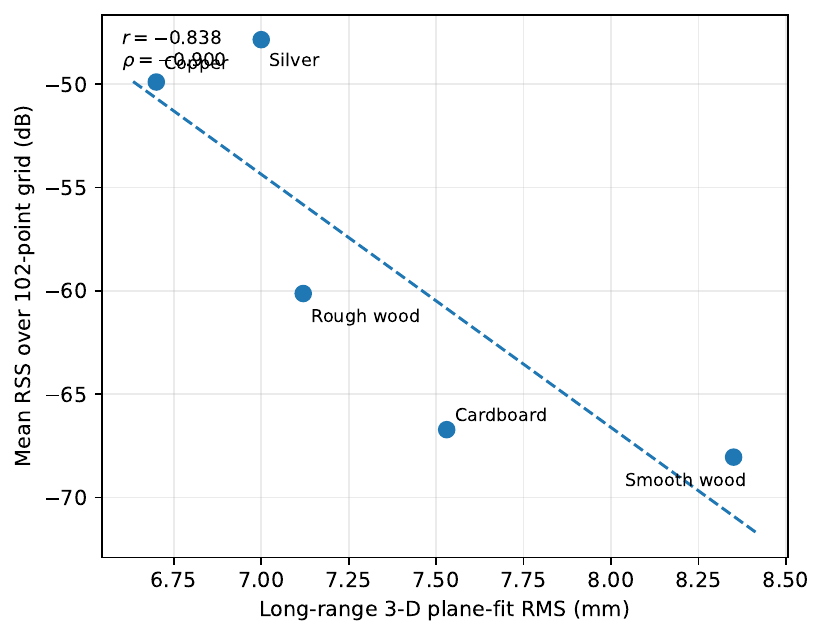}}
\hfill
\subfloat[3-D RMS versus RF spatial standard deviation.]{
\includegraphics[width=0.32\textwidth]
{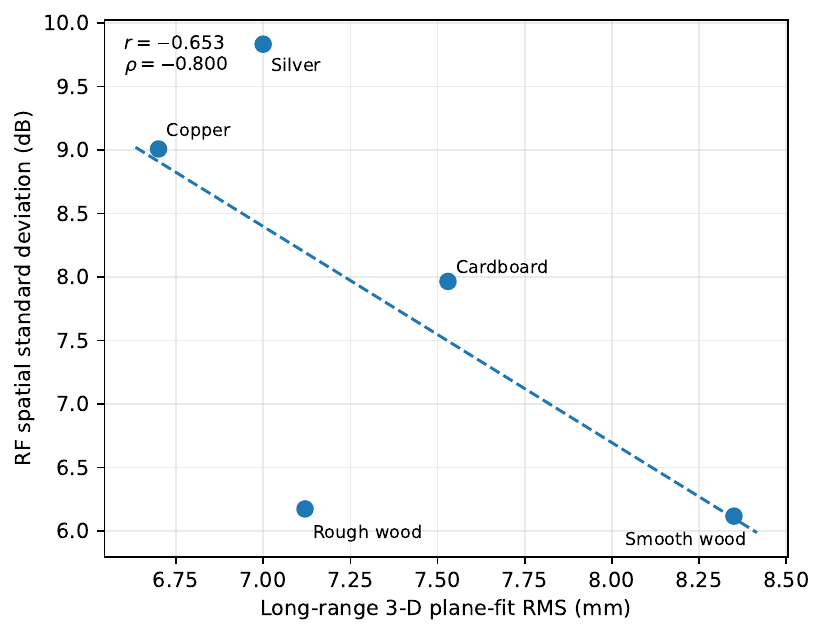}}
\hfill
\subfloat[\(S_{\max/\mu}\) versus robust RF spatial spread.]{
\includegraphics[width=0.32\textwidth]
{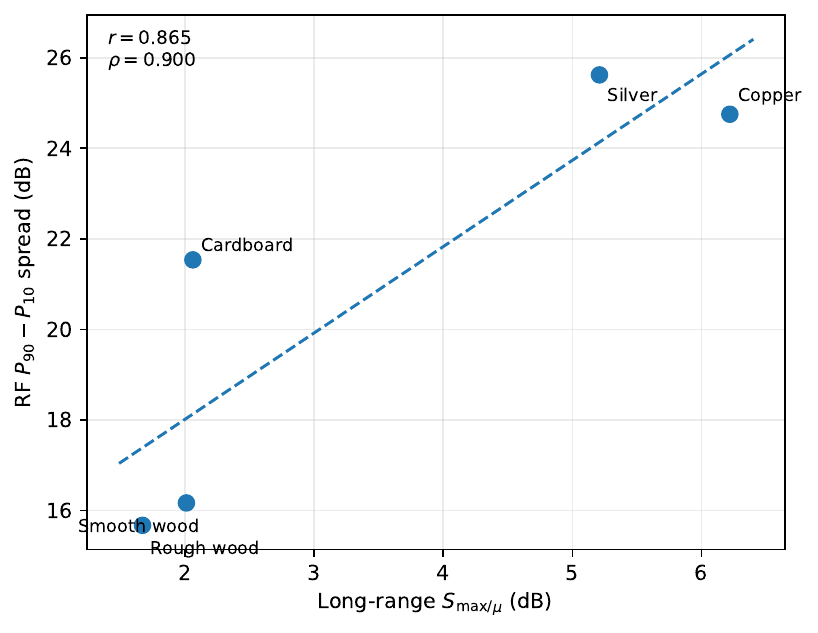}}
\caption{Matched long-range full-3-D LiDAR/mmWave associations across five
controlled surfaces. Geometric RMS does not provide a monotonic reflector
ranking, whereas larger max-to-mean LiDAR contrast is associated with
stronger measured 60-GHz responses. Correlations are descriptive for
\(n=5\).}
\label{fig:lshape_3d_crossmodal}
\end{figure*}

% ==========================================================
% V. ROOM-SCALE VALIDATION
% ==========================================================

\section{Room-Scale Validation: LiDAR-Guided Beam Search}
\label{sec:room_validation}

RQ3 evaluates whether the LiDAR-derived surface prior remains useful in an indoor environment where neither the relevant propagation
surface nor the dominant NLoS path is prescribed. Rather than predicting RF
power directly, the LiDAR observation is used to prioritize receive
directions for subsequent RF probing. Performance is evaluated against the
measured exhaustive \(19\times36\) TX/RX beam map at each room location,
which provides the RF reference for determining whether the LiDAR-guided
search retains a near-optimal beam while reducing the number of directions
that must be probed.

\subsection{LoS-RSS Angular Calibration and Geometry Check}
\label{subsec:room_sanity}

The 41 LoS-labeled locations are first used as a calibration check for the
LiDAR-to-RF angular registration. The direct TX--RX direction is known from
the room geometry and is independently observed through the strongest
measured LoS RSS beam pair. After registration, the median absolute angular
errors are \(2.84^\circ\) at the TX and \(6.34^\circ\) at the RX, and
\(38/41=92.7\%\) of the LoS locations fall within \(\pm10^\circ\) at the TX
and \(\pm20^\circ\) at the RX. No NLoS location is used to fit an additional
angular offset.

With this calibration fixed, we test whether the measured RF beam pair is
compatible with a LiDAR-observed single-bounce surface using tolerances of
\(\pm7.5^\circ\) at the TX and \(\pm15^\circ\) at the RX. Under the full
codebook, \(7/12\) strict-NLoS optima are geometrically compatible. The
AoA-pruned receive-region compatibility is \(40/55=72.7\%\) across all
measured room locations. At LoS locations, the strongest side/rear secondary
component is selected rather than the direct LoS beam. The incompatible cases
may result from unobserved surfaces, multiple interactions, off-specular
scattering, or sidelobe coupling. Thus, the LoS measurements verify the
numerical LiDAR--RF registration, while the compatibility test quantifies how
often the registered visible-surface geometry can explain the measured RF
direction.

\subsection{Three-Ring 3-D Representation and Ablation}
\label{subsec:room_descriptor_protocol}

Table~\ref{tab:representation_fixed} compares fixed LiDAR ranking cues
across the one-ring 2-D, local three-ring 3-D, and unrestricted all-ring
3-D representations at the primary \(K_R=7\) RX-direction probing budget
over the 12 strict-NLoS locations.  The reported percentage denotes the
fraction of locations for which the selected beam is within 3~dB of the
exhaustive-search optimum. The ring-6 2-D coherence \(\Pi(S)\) in
\eqref{eq:lidar_surface_coherence} and the local three-ring 3-D planarity
\(\Pi_{3\mathrm D}(S)\) in \eqref{eq:lidar_3d_planarity} both achieve
\(75.0\%\) within-3-dB recovery with zero median loss. In contrast, fixed
all-ring 3-D planarity achieves \(41.7\%\), while all-ring apparent return
achieves \(75.0\%\). These results show that increasing vertical LiDAR
support does not necessarily improve beam ranking and that its effect
depends on both the descriptor and the point-cloud representation.

\begin{table}[t]
\scriptsize
\caption{Fixed-cue LiDAR representation comparison over the 12 strict-NLoS
locations at \(K_R=7\) and \(N_0=64\). Each entry reports the percentage of
locations with beamforming loss within 3~dB of exhaustive search, with the
median beamforming loss shown in parentheses.}
\label{tab:representation_fixed}
\setlength{\tabcolsep}{5.5pt}
\renewcommand{\arraystretch}{1.08}
\begin{tabular}{lccc}
\toprule
Ranking cue & One-ring 2-D & Three-ring 3-D & All-ring 3-D\\
\midrule

\hspace{-2mm} RMS dispersion
& 41.7\% (4.14~dB)
& 50.0\% (5.71~dB)
& 66.7\% (0.00~dB)\\

 \hspace{-2mm} Intensity CV
& 41.7\% (5.17~dB)
& 33.3\% (5.50~dB)
& 16.7\% (5.76~dB)\\

\hspace{-2mm} Max/mean
& 41.7\% (4.92~dB)
& 16.7\% (8.50~dB)
& 33.3\% (5.76~dB)\\

\hspace{-2mm} Coherence / planarity \hspace{-2mm}
& \textbf{75.0\% (0.00~dB)}
& \textbf{75.0\% (0.00~dB)}
& 41.7\% (6.67~dB)\\

\hspace{-2mm} Apparent return
& 66.7\% (0.67~dB)
& 66.7\% (0.67~dB)
& \textbf{75.0\% (0.00~dB)}\\

\bottomrule
\end{tabular}
\end{table}

Table~\ref{tab:representation_fixed} is therefore used as a fixed-cue
diagnostic to separate the effects of descriptor choice and point-cloud
representation. The following subsection considers a different question; 
whether a ranking cue selected from the remaining locations generalizes to
a held-out location. Equal-budget random probing is evaluated separately in
the system-level comparison of Fig.~\ref{fig:representation_loss_cdf}.

\subsection{Held-Out Three-Ring 3-D Beam Selection}
\label{subsec:heldout_3d}

The objective of this experiment is to determine whether LiDAR beam-level
descriptors can reduce RX beam-search effort at a held-out room location.
The descriptors are used to rank candidate RX directions, and performance is
measured by whether the top \(K_R\) directions contain an RF beam within
3~dB of the exhaustive-search optimum.

Using the leave-one-location-out procedure defined in
Section~\ref{subsec:beam_search_protocol}, the ranking cue for each LiDAR
representation is selected using the remaining locations and then applied
to the held-out location. For the local three-ring representation, planarity
is selected in all 12 strict-NLoS folds. At \(K_R=7\), the resulting search
retains a beam within 3~dB of the exhaustive optimum at
\(9/12=75.0\%\) of the strict-NLoS locations.

The procedure is repeated separately over the 41 LoS-labeled locations
using the side/rear AoA-pruned receive region. Three-ring planarity is again
selected in every fold and achieves \(32/41=78.0\%\) within-3-dB recovery.
A separate held-out evaluation over all 55 measured locations uses the same
AoA-pruned receive region and achieves \(41/55=74.5\%\) recovery with zero
median loss. These are separate within-subset evaluations; the method is not
trained on the strict-NLoS subset and then transfer-tested to the other
locations. Table~\ref{tab:representation_loocv} and
Fig.~\ref{fig:representation_loocv} compare the three LiDAR representations.

\begin{table*}[t]
\centering
\scriptsize
\caption{Held-out TX-independent performance at \(K_R=7\) for the three
LiDAR representations. LoS and All 55 use the common AoA-pruned receive
region.}
\label{tab:representation_loocv}
\begin{tabular}{llrrr}
\toprule
Representation & Selected cue & Strict NLoS & LoS secondary & All 55 pruned\\
\midrule
One-ring 2-D
& Horizontal coherence
& 9/12 (75.0\%)
& 21/41 (51.2\%)
& 29/55 (52.7\%)\\

Three-ring 3-D
& Local planarity
& \textbf{9/12 (75.0\%)}
& \textbf{32/41 (78.0\%)}
& \textbf{41/55 (74.5\%)}\\

All-ring 3-D
& Held-out selected cue
& 9/12 (75.0\%)
& 16/41 (39.0\%)
& 15/55 (27.3\%)\\
\bottomrule
\end{tabular}
\end{table*}

\begin{figure}[t]
\centering
\includegraphics[width=\columnwidth]
{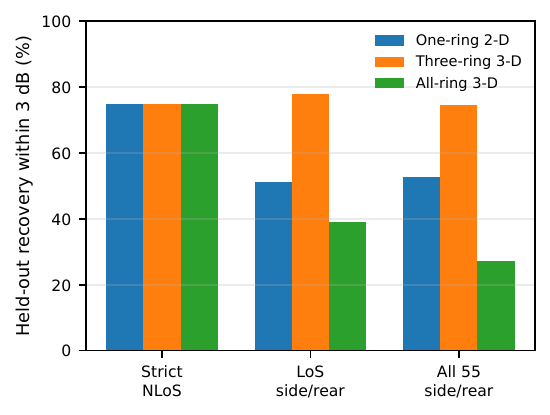}
\caption{Held-out TX-independent beam-selection performance at \(K_R=7\)
for the three LiDAR representations across strict-NLoS, LoS-secondary, and
all-location AoA-pruned evaluations.}
\label{fig:representation_loocv}
\end{figure}

\subsection{Comparison with a Known-TX Geometry Baseline}
\label{subsec:known_tx_surface_selection}

\begin{figure}[t]
\centering
\includegraphics[width=\columnwidth]
{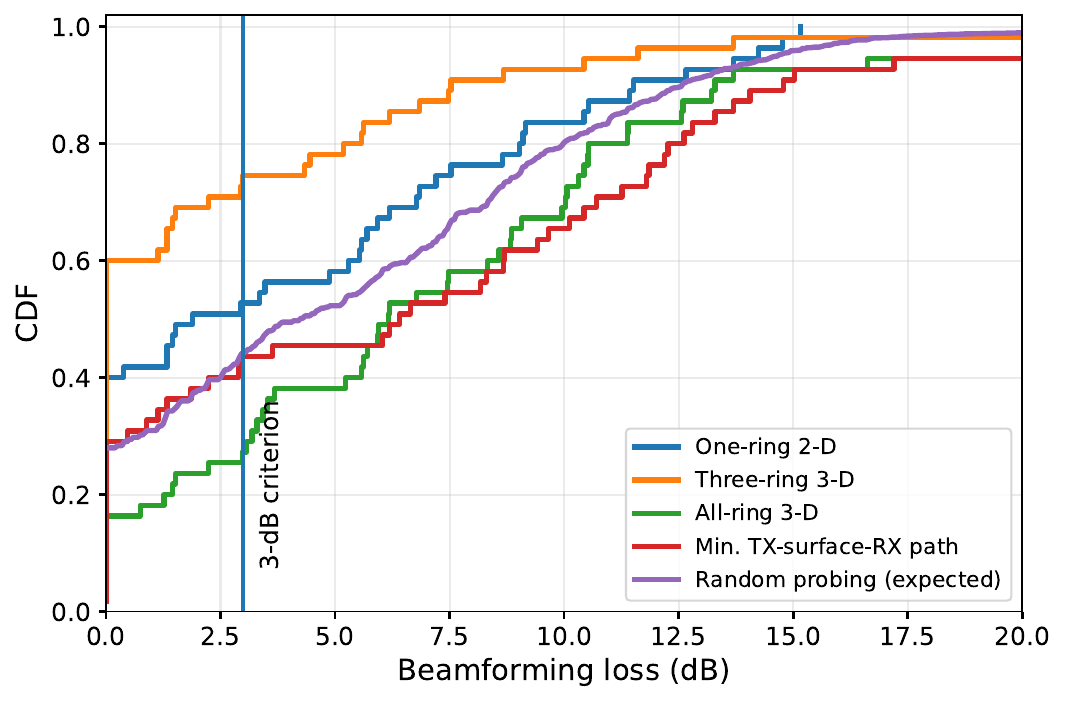}
\caption{CDF of beamforming loss at \(K_R=7\) over all 55 locations using
the common AoA-pruned receive region. The three-ring LiDAR result uses
held-out cue selection; the known-TX minimum-path geometry and random
baselines use the same seven-direction probing budget.}
\label{fig:representation_loss_cdf}
\end{figure}

Figure~\ref{fig:representation_loss_cdf} compares the cumulative distribution
of beamforming loss over all 55 measured locations under the common
\(K_R=7\) RX-direction probing budget. The held-out three-ring 3-D planarity
method achieves within-3-dB recovery at \(41/55=74.5\%\) of the locations,
compared with \(43.6\%\) for the known-TX minimum TX--surface--RX path
baseline and \(44.3\%\) for equal-budget random probing. The geometry
baseline uses additional knowledge of the TX location to rank candidate
paths, whereas the three-ring method ranks RX directions from locally
observed LiDAR surface structure without assuming the upstream propagation
path.  The results therefore show that local 3-D surface structure can provide a
useful beam-search prior even when explicit TX--surface--RX path geometry
is not used.

% ==========================================================
% VI. DISCUSSION
% ==========================================================

\section{Discussion}
\label{sec:discussion}

The three measurement campaigns provide affirmative but bounded answers to
RQ1--RQ3 and demonstrate a multimodal approach for exploiting naturally
occurring passive reflectors in indoor environments. In this framework,
LiDAR provides environmental information about surface geometry and return
structure, while the mmWave radio determines the communication utility of
the observed surfaces through RF probing. The central result is therefore
not that LiDAR predicts 60-GHz received power, but that complementary
information from the two modalities can reduce uncertainty about which
visible surfaces and directions are worth probing. Table~\ref{tab:rq_summary}
summarizes the experimental evidence and the resulting answers to the three
research questions.

\begin{table*}[t]
\centering
\scriptsize
\caption{Summary of the experimental answers to RQ1--RQ3 and their
implications for LiDAR--mmWave beam management.}
\label{tab:rq_summary}
\setlength{\tabcolsep}{6pt}
\renewcommand{\arraystretch}{1.12}
\begin{tabular}{
p{0.055\textwidth}
p{0.48\textwidth}
p{0.39\textwidth}}
\toprule
\textbf{RQ} &
\textbf{Key Experimental Evidence} &
\textbf{Answer and IoT Implication} \\
\midrule

RQ1 &
Descriptor stability depends on the LiDAR acquisition condition. Geometric
RMS preserves surface ordering at comparable range, whereas
\(S_{\max/\mu}\) shows stronger rank persistence across the two tested
oblique ranges. &
\textbf{Descriptor behavior is acquisition dependent.} Surface priors should
therefore be derived from the current LiDAR observation rather than treated
as fixed material properties. \\

\midrule

RQ2 &
In the matched corridor measurements, copper and silver produce the strongest
average 60-GHz fields and also exhibit the largest LiDAR max-to-mean
radiometric contrast, although the exact RF ordering is not preserved. &
\textbf{LiDAR provides surface-ranking information but does not directly
predict the 60-GHz response.} This supports LiDAR-guided candidate
prioritization followed by RF verification. \\

\midrule

RQ3 &
With \(K_R=7\), held-out three-ring 3-D planarity retains a beam within
3~dB of exhaustive search at \(41/55=74.5\%\) of the measured locations
under the common AoA-pruned receive-region evaluation. &
\textbf{Local LiDAR structure can reduce beam-search uncertainty.} LiDAR
prioritizes RX directions for RF probing, while the mmWave measurements
determine the final beam. \\

\bottomrule
\end{tabular}
\end{table*}

\subsection{Acquisition Dependence and Cross-Modal Interpretation}

RQ1 shows that the measured LiDAR descriptors should be interpreted as
properties of the observed surface under the current sensing condition rather
than as intrinsic material constants. In particular, the near-normal to
short-range oblique experiment preserves the geometric RMS ordering, whereas
the long-range oblique condition changes it substantially. The change cannot
be attributed to incidence angle alone because range, footprint, sampling,
retained support, and sensor response also change. Consequently,
\(R_{\mathrm{rms},L}\) represents geometric return dispersion rather than
microscopic RF-scale roughness. The radiometric results provide a more useful communication-oriented
observation. Although \(S_{\max/\mu}\) is not invariant in magnitude, it
retains stronger ordering across the tested oblique conditions and, in RQ2,
places copper and silver surfaces above the three weaker RF surfaces. This
does not establish an optical-to-RF reflection law. Instead, it shows that
one sensing modality can provide useful prior information for another:
LiDAR can rank naturally occurring candidate reflectors, while RF measurements
determine which surface actually supports the useful communication link.

\subsection{Local 3-D Structure for Beam Search}

RQ3 extends this multimodal surface-prior concept to a room where neither the
relevant surface nor the propagation path is prescribed.  The room results show that a local three-ring representation is preferable to
unrestricted vertical aggregation. Combining all eight rings can mix floor,
walls, furniture, and other structures within the same azimuth sector,
whereas the three-ring representation better preserves local surface
organization.

The held-out beam-selection results provide the communication-level evidence.
Using only the top \(K_R=7\) LiDAR-ranked RX directions, three-ring
planarity retains a beam within 3~dB of the exhaustive optimum at
\(75.0\%\) of strict-NLoS locations, \(78.0\%\) of the LoS locations when
searching for side/rear secondary components, and \(74.5\%\) over all
55 measured locations. These results illustrate the complementary roles of
the two modalities.

\subsection{Surface Characterization Rather Than Material Classification}

The proposed approach does not require semantic material identification.
Instead, the objective is communication-oriented surface characterization:
identify naturally available indoor surfaces that are geometrically plausible
and structurally promising enough to merit RF probing. Exact material identity
is neither required nor inferred.

This distinction is important because optical and 60-GHz responses depend on
different wavelengths, material properties, polarizations, and propagation
geometries. A surface that is prominent in LiDAR may not provide the strongest
RF reflection toward a particular receiver. The multimodal framework therefore
does not require a direct mapping between LiDAR return strength and RF power.
LiDAR provides a prior for identifying candidate passive reflectors, while RF
measurements remain the final indicator of communication utility.

\section{Conclusion}
\label{sec:conclusion}

This paper developed and experimentally evaluated a multimodal
LiDAR--mmWave framework for sensing-assisted beam search in indoor
60-GHz networks. The framework proposes exploiting indoor reflective surfaces to support
NLoS beam search.  LiDAR provides geometric and radiometric
surface information to prioritize candidate RF directions, while RF
measurements determine the final communication beam. The controlled experiments show that LiDAR provides useful
surface-dependent information for communication-oriented ranking. Geometric return dispersion remains consistent at comparable LiDAR ranges,
while max-to-mean radiometric contrast is more persistent across the tested
oblique ranges and is associated with stronger measured 60-GHz responses.  In the
room-scale experiment, local three-ring 3-D planarity retains a beam within
3~dB of exhaustive search at \(74.5\%\) of the measured locations under the
common AoA-pruned evaluation using \(K_R=7\) RX directions.

Future work will extend the framework to broader indoor environments and
adaptive 3-D surface representations. Overall, the results support using
LiDAR-observed surface structure to prioritize candidate RF directions,
followed by targeted RF probing for final beam selection.

\balance

\end{document}